\newcommand{\safemath}[2]{\newcommand{#1}{\ensuremath{#2}\xspace}}
\renewcommand{\safemath}[2]{\newcommand{#1}{\ensuremath{#2}\xspace}}
\newcommand{\ssa}{\mathsf{a}}
\newcommand{\ssb}{\mathsf{b}}
\newcommand{\ssc}{\mathsf{c}}
\newcommand{\ssd}{\mathsf{d}}
\newcommand{\sse}{\mathsf{e}}
\newcommand{\ssf}{\mathsf{f}}
\newcommand{\ssg}{\mathsf{g}}
\newcommand{\ssh}{\mathsf{h}}
\newcommand{\ssi}{\mathsf{i}}
\newcommand{\ssj}{\mathsf{j}}
\newcommand{\ssk}{\mathsf{k}}
\newcommand{\ssl}{\mathsf{l}}
\newcommand{\ssm}{\mathsf{m}}
\newcommand{\ssn}{\mathsf{n}}
\newcommand{\sso}{\mathsf{o}}
\newcommand{\ssp}{\mathsf{p}}
\newcommand{\ssq}{\mathsf{q}}
\newcommand{\ssr}{\mathsf{r}}
\newcommand{\sss}{\mathsf{s}}
\newcommand{\sst}{\mathsf{t}}
\newcommand{\ssu}{\mathsf{u}}
\newcommand{\ssv}{\mathsf{v}}
\newcommand{\ssw}{\mathsf{w}}
\newcommand{\ssx}{\mathsf{x}}
\newcommand{\ssy}{\mathsf{y}}
\newcommand{\ssz}{\mathsf{z}}
\safemath{\bmsa}{\bm{\ssa}}
\safemath{\bmsb}{\bm{\ssb}}
\safemath{\bmsc}{\bm{\ssc}}
\safemath{\bmsd}{\bm{\ssd}}
\safemath{\bmse}{\bm{\sse}}
\safemath{\bmsf}{\bm{\ssf}}
\safemath{\bmsg}{\bm{\ssg}}
\safemath{\bmsh}{\bm{\ssh}}
\safemath{\bmsi}{\bm{\ssi}}
\safemath{\bmsj}{\bm{\ssj}}
\safemath{\bmsk}{\bm{\ssk}}
\safemath{\bmsl}{\bm{\ssl}}
\safemath{\bmsm}{\bm{\ssm}}
\safemath{\bmsn}{\bm{\ssn}}
\safemath{\bmso}{\bm{\sso}}
\safemath{\bmsp}{\bm{\ssp}}
\safemath{\bmsq}{\bm{\ssq}}
\safemath{\bmsr}{\bm{\ssr}}
\safemath{\bmss}{\bm{\sss}}
\safemath{\bmst}{\bm{\sst}}
\safemath{\bmsu}{\bm{\ssu}}
\safemath{\bmsv}{\bm{\ssv}}
\safemath{\bmsw}{\bm{\ssw}}
\safemath{\bmsx}{\bm{\ssx}}
\safemath{\bmsy}{\bm{\ssy}}
\safemath{\bmsz}{\bm{\ssz}}
\bmdefine{\bmualphad}{\upalpha}
\bmdefine{\bmubetad}{\upbeta}
\bmdefine{\bmuchid}{\upchi}
\bmdefine{\bmudeltad}{\updelta}
\bmdefine{\bmuepsilond}{\upepsilon}
\bmdefine{\bmuvarepsilond}{\upvarepsilon}
\bmdefine{\bmuetad}{\upeta}
\bmdefine{\bmugammad}{\upgamma}
\bmdefine{\bmuiotad}{\upiota}
\bmdefine{\bmukappad}{\upkappa}
\bmdefine{\bmulambdad}{\uplambda}
\bmdefine{\bmumud}{\upmu}
\bmdefine{\bmunud}{\upnu}
\bmdefine{\bmuomegad}{\upomega}
\bmdefine{\bmuphid}{\upphi}
\bmdefine{\bmuvarphid}{\upvarphi}
\bmdefine{\bmupid}{\uppi}
\bmdefine{\bmuvarpid}{\upvarpi}
\bmdefine{\bmupsid}{\uppsi}
\bmdefine{\bmurhod}{\uprho}
\bmdefine{\bmuvarrhod}{\upvarrho}
\bmdefine{\bmusigmad}{\upsigma}
\bmdefine{\bmuvarsigmad}{\upvarsigma}
\bmdefine{\bmutaud}{\uptau}
\bmdefine{\bmuthetad}{\uptheta}
\bmdefine{\bmuvarthetad}{\upvartheta}
\bmdefine{\bmuupsilond}{\upupsilon}
\bmdefine{\bmuxid}{\upxi}
\bmdefine{\bmuzetad}{\upzeta}
\safemath{\bmua}{\mathbf{a}}
\safemath{\bmub}{\mathbf{b}}
\safemath{\bmuc}{\mathbf{c}}
\safemath{\bmud}{\mathbf{d}}
\safemath{\bmue}{\mathbf{e}}
\safemath{\bmuf}{\mathbf{f}}
\safemath{\bmug}{\mathbf{g}}
\safemath{\bmuh}{\mathbf{h}}
\safemath{\bmui}{\mathbf{i}}
\safemath{\bmuj}{\mathbf{j}}
\safemath{\bmuk}{\mathbf{k}}
\safemath{\bmul}{\mathbf{l}}
\safemath{\bmum}{\mathbf{m}}
\safemath{\bmun}{\mathbf{n}}
\safemath{\bmuo}{\mathbf{o}}
\safemath{\bmup}{\mathbf{p}}
\safemath{\bmuq}{\mathbf{q}}
\safemath{\bmur}{\mathbf{r}}
\safemath{\bmus}{\mathbf{s}}
\safemath{\bmut}{\mathbf{t}}
\safemath{\bmuu}{\mathbf{u}}
\safemath{\bmuv}{\mathbf{v}}
\safemath{\bmuw}{\mathbf{w}}
\safemath{\bmux}{\mathbf{x}}
\safemath{\bmuy}{\mathbf{y}}
\safemath{\bmuz}{\mathbf{z}}
\safemath{\bmualpha}{\bmualphad}
\safemath{\bmubeta}{\bmubetad}
\safemath{\bmuchi}{\bumchid}
\safemath{\bmudelta}{\bmudeltad}
\safemath{\bmuepsilon}{\bmuepsilond}
\safemath{\bmuvarepsilon}{\bmuvarepsilond}
\safemath{\bmueta}{\bmuetad}
\safemath{\bmugamma}{\bmugammad}
\safemath{\bmuiota}{\bmuiotad}
\safemath{\bmukappa}{\bmukappad}
\safemath{\bmulambda}{\bmulambdad}
\safemath{\bmumu}{\bmumud}
\safemath{\bmunu}{\bmunud}
\safemath{\bmuomega}{\bmuomegad}
\safemath{\bmuphi}{\bmuphid}
\safemath{\bmuvarphi}{\bmuvarphid}
\safemath{\bmupi}{\bmupid}
\safemath{\bmuvarpi}{\bmuvarpid}
\safemath{\bmupsi}{\bmupsid}
\safemath{\bmurho}{\bmurhod}
\safemath{\bmuvarrho}{\bmuvarrhod}
\safemath{\bmusigma}{\bmusigmad}
\safemath{\bmuvarsigma}{\bmuvarsigmad}
\safemath{\bmutau}{\bmutaud}
\safemath{\bmutheta}{\bmuthetad}
\safemath{\bmuvartheta}{\bmuvarthetad}
\safemath{\bmuupsilon}{\bmuupsilond}
\safemath{\bmuxi}{\bmuxid}
\safemath{\bmuzeta}{\bmuzetad}
\bmdefine{\bmiad}{a}
\bmdefine{\bmibd}{b}
\bmdefine{\bmicd}{c}
\bmdefine{\bmidd}{d}
\bmdefine{\bmied}{e}
\bmdefine{\bmifd}{f}
\bmdefine{\bmigd}{g}
\bmdefine{\bmihd}{h}
\bmdefine{\bmiid}{i}
\bmdefine{\bmijd}{j}
\bmdefine{\bmikd}{k}
\bmdefine{\bmild}{l}
\bmdefine{\bmimd}{m}
\bmdefine{\bmind}{n}
\bmdefine{\bmiod}{o}
\bmdefine{\bmipd}{p}
\bmdefine{\bmiqd}{q}
\bmdefine{\bmird}{r}
\bmdefine{\bmisd}{s}
\bmdefine{\bmitd}{t}
\bmdefine{\bmiud}{u}
\bmdefine{\bmivd}{v}
\bmdefine{\bmiwd}{w}
\bmdefine{\bmixd}{x}
\bmdefine{\bmiyd}{y}
\bmdefine{\bmizd}{z}
\bmdefine{\bmialphad}{\alpha}
\bmdefine{\bmibetad}{\beta}
\bmdefine{\bmichid}{\chi}
\bmdefine{\bmideltad}{\delta}
\bmdefine{\bmiepsilond}{\epsilon}
\bmdefine{\bmivarepsilond}{\varepsilon}
\bmdefine{\bmietad}{\eta}
\bmdefine{\bmigammad}{\gamma}
\bmdefine{\bmiiotad}{\iota}
\bmdefine{\bmikappad}{\kappa}
\bmdefine{\bmivarkappad}{\varkappa}
\bmdefine{\bmilambdad}{\lambda}
\bmdefine{\bmimud}{\mu}
\bmdefine{\bminud}{\nu}
\bmdefine{\bmiomegad}{\omega}
\bmdefine{\bmiphid}{\phi}
\bmdefine{\bmivarphid}{\varphi}
\bmdefine{\bmipid}{\pi}
\bmdefine{\bmivarpid}{\varpi}
\bmdefine{\bmipsid}{\psi}
\bmdefine{\bmirhod}{\rho}
\bmdefine{\bmivarrhod}{\varrho}
\bmdefine{\bmisigmad}{\sigma}
\bmdefine{\bmivarsigmad}{\varsigma}
\bmdefine{\bmitaud}{\tau}
\bmdefine{\bmithetad}{\theta}
\bmdefine{\bmivarthetad}{\vartheta}
\bmdefine{\bmiupsilond}{\upsilon}
\bmdefine{\bmixid}{\xi}
\bmdefine{\bmizetad}{\zeta}
\safemath{\bmia}{\bmiad}
\safemath{\bmib}{\bmibd}
\safemath{\bmic}{\bmicd}
\safemath{\bmid}{\bmidd}
\safemath{\bmie}{\bmied}
\safemath{\bmif}{\bmifd}
\safemath{\bmig}{\bmigd}
\safemath{\bmih}{\bmihd}
\safemath{\bmii}{\bmiid}
\safemath{\bmij}{\bmijd}
\safemath{\bmik}{\bmikd}
\safemath{\bmil}{\bmild}
\safemath{\bmim}{\bmimd}
\safemath{\bmin}{\bmind}
\safemath{\bmio}{\bmiod}
\safemath{\bmip}{\bmipd}
\safemath{\bmiq}{\bmiqd}
\safemath{\bmir}{\bmird}
\safemath{\bmis}{\bmisd}
\safemath{\bmit}{\bmitd}
\safemath{\bmiu}{\bmiud}
\safemath{\bmiv}{\bmivd}
\safemath{\bmiw}{\bmiwd}
\safemath{\bmix}{\bmixd}
\safemath{\bmiy}{\bmiyd}
\safemath{\bmiz}{\bmizd}
\safemath{\bmialpha}{\bmialphad}
\safemath{\bmibeta}{\bmibetad}
\safemath{\bmichi}{\bmichid}
\safemath{\bmidelta}{\bmideltad}
\safemath{\bmiepsilon}{\bmiepsilond}
\safemath{\bmivarepsilon}{\bmivarepsilond}
\safemath{\bmieta}{\bmietad}
\safemath{\bmigamma}{\bmigammad}
\safemath{\bmiiota}{\bmiiotad}
\safemath{\bmikappa}{\bmikappad}
\safemath{\bmivarkappa}{\bmivarkappad}
\safemath{\bmilambda}{\bmilambdad}
\safemath{\bmimu}{\bmimud}
\safemath{\bminu}{\bminud}
\safemath{\bmiomega}{\bmiomegad}
\safemath{\bmiphi}{\bmiphid}
\safemath{\bmivarphi}{\bmivarphid}
\safemath{\bmipi}{\bmipid}
\safemath{\bmivarpi}{\bmivarpid}
\safemath{\bmipsi}{\bmipsid}
\safemath{\bmirho}{\bmirhod}
\safemath{\bmivarrho}{\bmivarrhod}
\safemath{\bmisigma}{\bmisigmad}
\safemath{\bmivarsigma}{\bmivarsigmad}
\safemath{\bmitau}{\bmitaud}
\safemath{\bmitheta}{\bmithetad}
\safemath{\bmivartheta}{\bmivarthetad}
\safemath{\bmiupsilon}{\bmiupsilond}
\safemath{\bmixi}{\bmixid}
\safemath{\bmizeta}{\bmizetad}
\bmdefine{\bmuDeltad}{\Updelta}
\bmdefine{\bmuGammad}{\Upgamma}
\bmdefine{\bmuLambdad}{\Uplambda}
\bmdefine{\bmuOmegad}{\Upomega}
\bmdefine{\bmuPhid}{\Upphi}
\bmdefine{\bmuPid}{\Uppi}
\bmdefine{\bmuPsid}{\Uppsi}
\bmdefine{\bmuSigmad}{\Upsigma}
\bmdefine{\bmuThetad}{\Uptheta}
\bmdefine{\bmuUpsilond}{\Upupsilon}
\bmdefine{\bmuXid}{\Upxi}
\safemath{\bmuA}{\mathbf{A}}
\safemath{\bmuB}{\mathbf{B}}
\safemath{\bmuC}{\mathbf{C}}
\safemath{\bmuD}{\mathbf{D}}
\safemath{\bmuE}{\mathbf{E}}
\safemath{\bmuF}{\mathbf{F}}
\safemath{\bmuG}{\mathbf{G}}
\safemath{\bmuH}{\mathbf{H}}
\safemath{\bmuI}{\mathbf{I}}
\safemath{\bmuJ}{\mathbf{J}}
\safemath{\bmuK}{\mathbf{K}}
\safemath{\bmuL}{\mathbf{L}}
\safemath{\bmuM}{\mathbf{M}}
\safemath{\bmuN}{\mathbf{N}}
\safemath{\bmuO}{\mathbf{O}}
\safemath{\bmuP}{\mathbf{P}}
\safemath{\bmuQ}{\mathbf{Q}}
\safemath{\bmuR}{\mathbf{R}}
\safemath{\bmuS}{\mathbf{S}}
\safemath{\bmuT}{\mathbf{T}}
\safemath{\bmuU}{\mathbf{U}}
\safemath{\bmuV}{\mathbf{V}}
\safemath{\bmuW}{\mathbf{W}}
\safemath{\bmuX}{\mathbf{X}}
\safemath{\bmuY}{\mathbf{Y}}
\safemath{\bmuZ}{\mathbf{Z}}
\safemath{\bmuZero}{\mathbf{0}}
\safemath{\bmuOne}{\mathbf{1}}
\safemath{\bmuDelta}{\bmuDeltad}
\safemath{\bmuGamma}{\bmuGammad}
\safemath{\bmuLambda}{\bmuLambdad}
\safemath{\bmuOmega}{\bmuOmegad}
\safemath{\bmuPhi}{\bmuPhid}
\safemath{\bmuPi}{\bmuPid}
\safemath{\bmuPsi}{\bmuPsid}
\safemath{\bmuSigma}{\bmuSigmad}
\safemath{\bmuTheta}{\bmuThetad}
\safemath{\bmuUpsilon}{\bmuUpsilond}
\safemath{\bmuXi}{\bmuXid}
\bmdefine{\bmiAd}{A}
\bmdefine{\bmiBd}{B}
\bmdefine{\bmiCd}{C}
\bmdefine{\bmiDd}{D}
\bmdefine{\bmiEd}{E}
\bmdefine{\bmiFd}{F}
\bmdefine{\bmiGd}{G}
\bmdefine{\bmiHd}{H}
\bmdefine{\bmiId}{I}
\bmdefine{\bmiJd}{J}
\bmdefine{\bmiKd}{K}
\bmdefine{\bmiLd}{L}
\bmdefine{\bmiMd}{M}
\bmdefine{\bmiOd}{N}
\bmdefine{\bmiPd}{O}
\bmdefine{\bmiQd}{P}
\bmdefine{\bmiRd}{R}
\bmdefine{\bmiSd}{S}
\bmdefine{\bmiTd}{T}
\bmdefine{\bmiUd}{U}
\bmdefine{\bmiVd}{V}
\bmdefine{\bmiWd}{W}
\bmdefine{\bmiXd}{X}
\bmdefine{\bmiYd}{Y}
\bmdefine{\bmiZd}{Z}
\bmdefine{\bmiDeltad}{\Delta}
\bmdefine{\bmiGammad}{\Gamma}
\bmdefine{\bmiLambdad}{\Lambda}
\bmdefine{\bmiOmegad}{\Omega}
\bmdefine{\bmiPhid}{\Phi}
\bmdefine{\bmiPid}{\Pi}
\bmdefine{\bmiPsid}{\Psi}
\bmdefine{\bmiSigmad}{\Sigma}
\bmdefine{\bmiThetad}{\Theta}
\bmdefine{\bmiUpsilond}{\Upsilon}
\bmdefine{\bmiXid}{\Xi}
\safemath{\bmiA}{\bmiAd}
\safemath{\bmiB}{\bmiBd}
\safemath{\bmiC}{\bmiCd}
\safemath{\bmiD}{\bmiDd}
\safemath{\bmiE}{\bmiEd}
\safemath{\bmiF}{\bmiFd}
\safemath{\bmiG}{\bmiGd}
\safemath{\bmiH}{\bmiHd}
\safemath{\bmiI}{\bmiId}
\safemath{\bmiJ}{\bmiJd}
\safemath{\bmiK}{\bmiKd}
\safemath{\bmiL}{\bmiLd}
\safemath{\bmiM}{\bmiMd}
\safemath{\bmiN}{\bmiNd}
\safemath{\bmiO}{\bmiOd}
\safemath{\bmiP}{\bmiPd}
\safemath{\bmiQ}{\bmiQd}
\safemath{\bmiR}{\bmiRd}
\safemath{\bmiS}{\bmiSd}
\safemath{\bmiT}{\bmiTd}
\safemath{\bmiU}{\bmiUd}
\safemath{\bmiV}{\bmiVd}
\safemath{\bmiW}{\bmiWd}
\safemath{\bmiX}{\bmiXd}
\safemath{\bmiY}{\bmiYd}
\safemath{\bmiZ}{\bmiZd}
\safemath{\bmiDelta}{\bmiDeltad}
\safemath{\bmiGamma}{\bmiGammad}
\safemath{\bmiLambda}{\bmiLambdad}
\safemath{\bmiOmega}{\bmiOmegad}
\safemath{\bmiPhi}{\bmiPhid}
\safemath{\bmiPi}{\bmiPid}
\safemath{\bmiPsi}{\bmiPsid}
\safemath{\bmiSigma}{\bmiSigmad}
\safemath{\bmiTheta}{\bmiThetad}
\safemath{\bmiUpsilon}{\bmiUpsilond}
\safemath{\bmiXi}{\bmiXid}
\safemath{\setA}{\mathcal{A}}
\safemath{\setB}{\mathcal{B}}
\safemath{\setC}{\mathcal{C}}
\safemath{\setD}{\mathcal{D}}
\safemath{\setE}{\mathcal{E}}
\safemath{\setF}{\mathcal{F}}
\safemath{\setG}{\mathcal{G}}
\safemath{\setH}{\mathcal{H}}
\safemath{\setI}{\mathcal{I}}
\safemath{\setJ}{\mathcal{J}}
\safemath{\setK}{\mathcal{K}}
\safemath{\setL}{\mathcal{L}}
\safemath{\setM}{\mathcal{M}}
\safemath{\setN}{\mathcal{N}}
\safemath{\setO}{\mathcal{O}}
\safemath{\setP}{\mathcal{P}}
\safemath{\setQ}{\mathcal{Q}}
\safemath{\setR}{\mathcal{R}}
\safemath{\setS}{\mathcal{S}}
\safemath{\setT}{\mathcal{T}}
\safemath{\setU}{\mathcal{U}}
\safemath{\setV}{\mathcal{V}}
\safemath{\setW}{\mathcal{W}}
\safemath{\setX}{\mathcal{X}}
\safemath{\setY}{\mathcal{Y}}
\safemath{\setZ}{\mathcal{Z}}
\safemath{\emptySet}{\varnothing}
\safemath{\colA}{\mathscr{A}}
\safemath{\colB}{\mathscr{B}}
\safemath{\colC}{\mathscr{C}}
\safemath{\colD}{\mathscr{D}}
\safemath{\colE}{\mathscr{E}}
\safemath{\colF}{\mathscr{F}}
\safemath{\colG}{\mathscr{G}}
\safemath{\colH}{\mathscr{H}}
\safemath{\colI}{\mathscr{I}}
\safemath{\colJ}{\mathscr{J}}
\safemath{\colK}{\mathscr{K}}
\safemath{\colL}{\mathscr{L}}
\safemath{\colM}{\mathscr{M}}
\safemath{\colN}{\mathscr{N}}
\safemath{\colO}{\mathscr{O}}
\safemath{\colP}{\mathscr{P}}
\safemath{\colQ}{\mathscr{Q}}
\safemath{\colR}{\mathscr{R}}
\safemath{\colS}{\mathscr{S}}
\safemath{\colT}{\mathscr{T}}
\safemath{\colU}{\mathscr{U}}
\safemath{\colV}{\mathscr{V}}
\safemath{\colW}{\mathscr{W}}
\safemath{\colX}{\mathscr{X}}
\safemath{\colY}{\mathscr{Y}}
\safemath{\colZ}{\mathscr{Z}}
\safemath{\opA}{\mathbb{A}}
\safemath{\opB}{\mathbb{B}}
\safemath{\opC}{\mathbb{C}}
\safemath{\opD}{\mathbb{D}}
\safemath{\opE}{\mathbb{E}}
\safemath{\opF}{\mathbb{F}}
\safemath{\opG}{\mathbb{G}}
\safemath{\opH}{\mathbb{H}}
\safemath{\opI}{\mathbb{I}}
\safemath{\opJ}{\mathbb{J}}
\safemath{\opK}{\mathbb{K}}
\safemath{\opL}{\mathbb{L}}
\safemath{\opM}{\mathbb{M}}
\safemath{\opN}{\mathbb{N}}
\safemath{\opO}{\mathbb{O}}
\safemath{\opP}{\mathbb{P}}
\safemath{\opQ}{\mathbb{Q}}
\safemath{\opR}{\mathbb{R}}
\safemath{\opS}{\mathbb{S}}
\safemath{\opT}{\mathbb{T}}
\safemath{\opU}{\mathbb{U}}
\safemath{\opV}{\mathbb{V}}
\safemath{\opW}{\mathbb{W}}
\safemath{\opX}{\mathbb{X}}
\safemath{\opY}{\mathbb{Y}}
\safemath{\opZ}{\mathbb{Z}}
\safemath{\opZero}{\mathbb{O}}
\safemath{\identityop}{\opI}
\safemath{\sca}{a}
\safemath{\scb}{b}
\safemath{\scc}{c}
\safemath{\scd}{d}
\safemath{\sce}{e}
\safemath{\scf}{f}
\safemath{\scg}{g}
\safemath{\sch}{h}
\safemath{\sci}{i}
\safemath{\scj}{j}
\safemath{\sck}{k}
\safemath{\scl}{l}
\safemath{\scm}{m}
\safemath{\scn}{n}
\safemath{\sco}{o}
\safemath{\scp}{p}
\safemath{\scq}{q}
\safemath{\scr}{r}
\safemath{\scs}{s}
\safemath{\sct}{t}
\safemath{\scu}{u}
\safemath{\scv}{v}
\safemath{\scw}{w}
\safemath{\scx}{x}
\safemath{\scy}{y}
\safemath{\scz}{z}
\safemath{\scA}{A}
\safemath{\scB}{B}
\safemath{\scC}{C}
\safemath{\scD}{D}
\safemath{\scE}{E}
\safemath{\scF}{F}
\safemath{\scG}{G}
\safemath{\scH}{H}
\safemath{\scI}{I}
\safemath{\scJ}{J}
\safemath{\scK}{K}
\safemath{\scL}{L}
\safemath{\scM}{M}
\safemath{\scN}{N}
\safemath{\scO}{O}
\safemath{\scP}{P}
\safemath{\scQ}{Q}
\safemath{\scR}{R}
\safemath{\scS}{S}
\safemath{\scT}{T}
\safemath{\scU}{U}
\safemath{\scV}{V}
\safemath{\scW}{W}
\safemath{\scX}{X}
\safemath{\scY}{Y}
\safemath{\scZ}{Z}
\safemath{\scalpha}{\alpha}
\safemath{\scbeta}{\beta}
\safemath{\scchi}{\chi}
\safemath{\scdelta}{\delta}
\safemath{\scepsilon}{\epsilon}
\safemath{\scvarepsilon}{\varepsilon}
\safemath{\sceta}{\eta}
\safemath{\scgamma}{\gamma}
\safemath{\sciota}{\iota}
\safemath{\sckappa}{\kappa}
\safemath{\scvarkappa}{\varkappa}
\safemath{\sclambda}{\lambda}
\safemath{\scmu}{\mu}
\safemath{\scnu}{\nu}
\safemath{\scomega}{\omega}
\safemath{\scphi}{\phi}
\safemath{\scvarphi}{\varphi}
\safemath{\scpi}{\pi}
\safemath{\scvarpi}{\varpi}
\safemath{\scpsi}{\psi}
\safemath{\scrho}{\rho}
\safemath{\scvarrho}{\varrho}
\safemath{\scsigma}{\sigma}
\safemath{\scvarsigma}{\varsigma}
\safemath{\sctau}{\tau}
\safemath{\sctheta}{\theta}
\safemath{\scvartheta}{\vartheta}
\safemath{\scupsilon}{\upsilon}
\safemath{\scxi}{\xi}
\safemath{\sczeta}{\zeta}
\safemath{\veca}{\mathbf{a}}
\safemath{\vecb}{\mathbf{b}}
\safemath{\vecc}{\mathbf{c}}
\safemath{\vecd}{\mathbf{d}}
\safemath{\vece}{\mathbf{e}}
\safemath{\vecf}{\mathbf{f}}
\safemath{\vecg}{\mathbf{g}}
\safemath{\vech}{\mathbf{h}}
\safemath{\veci}{\mathbf{i}}
\safemath{\vecj}{\mathbf{j}}
\safemath{\veck}{\mathbf{k}}
\safemath{\vecl}{\mathbf{l}}
\safemath{\vecm}{\mathbf{m}}
\safemath{\vecn}{\mathbf{n}}
\safemath{\veco}{\mathbf{o}}
\safemath{\vecp}{\mathbf{p}}
\safemath{\vecq}{\mathbf{q}}
\safemath{\vecr}{\mathbf{r}}
\safemath{\vecs}{\mathbf{s}}
\safemath{\vect}{\mathbf{t}}
\safemath{\vecu}{\mathbf{u}}
\safemath{\vectU}{\mathbf{U}}
\safemath{\vecv}{\mathbf{v}}
\safemath{\vecw}{\mathbf{w}}
\safemath{\vecx}{\mathbf{x}}
\safemath{\vectX}{\mathbf{X}}
\safemath{\vecy}{\mathbf{y}}
\safemath{\vecz}{\mathbf{z}}
\safemath{\veczero}{\mathbf{0}}
\safemath{\vecone}{\mathbf{1}}
\safemath{\vecalpha}{\upalpha}
\safemath{\vecbeta}{\upbeta}
\safemath{\vecchi}{\upchi}
\safemath{\vecdelta}{\updelta}
\safemath{\vecepsilon}{\upepsilon}
\safemath{\vecvarepsilon}{\upvarepsilon}
\safemath{\veceta}{\upeta}
\safemath{\vecgamma}{\upgamma}
\safemath{\veciota}{\upiota}
\safemath{\veckappa}{\upkappa}
\safemath{\veclambda}{\uplambda}
\safemath{\vecmu}{\text{\textmu}}
\safemath{\vecnu}{\upnu}
\safemath{\vecomega}{\upomega}
\safemath{\vecphi}{\upphi}
\safemath{\vecvarphi}{\upvarphi}
\safemath{\vecpi}{\uppi}
\safemath{\vecvarpi}{\upvarpi}
\safemath{\vecpsi}{\uppsi}
\safemath{\vecrho}{\uprho}
\safemath{\vecvarrho}{\upvarrho}
\safemath{\vecsigma}{\upsigma}
\safemath{\vecvarsigma}{\upvarsigma}
\safemath{\vectau}{\uptau}
\safemath{\vectheta}{\uptheta}
\safemath{\vecvartheta}{\upvartheta}
\safemath{\vecupsilon}{\upupsilon}
\safemath{\vecxi}{\upxi}
\safemath{\veczeta}{\upzeta}
\safemath{\matA}{\mathrm{A}}
\safemath{\matB}{\mathrm{B}}
\safemath{\matC}{\mathrm{C}}
\safemath{\matD}{\mathrm{D}}
\safemath{\matE}{\mathrm{E}}
\safemath{\matF}{\mathrm{F}}
\safemath{\matG}{\mathrm{G}}
\safemath{\matH}{\mathrm{H}}
\safemath{\matI}{\mathrm{I}}
\safemath{\matJ}{\mathrm{J}}
\safemath{\matK}{\mathrm{K}}
\safemath{\matL}{\mathrm{L}}
\safemath{\matM}{\mathrm{M}}
\safemath{\matN}{\mathrm{N}}
\safemath{\matO}{\mathrm{O}}
\safemath{\matP}{\mathrm{P}}
\safemath{\matQ}{\mathrm{Q}}
\safemath{\matR}{\mathrm{R}}
\safemath{\matS}{\mathrm{S}}
\safemath{\matT}{\mathrm{T}}
\safemath{\matU}{\mathrm{U}}
\safemath{\matV}{\mathrm{V}}
\safemath{\matW}{\mathrm{W}}
\safemath{\matX}{\mathrm{X}}
\safemath{\matY}{\mathrm{Y}}
\safemath{\matZ}{\mathrm{Z}}
\safemath{\matzero}{\mathrm{0}}
\safemath{\matDelta}{\Updelta}
\safemath{\matGamma}{\Upgammma}
\safemath{\matLambda}{\Uplambda}
\safemath{\matOmega}{\Upomega}
\safemath{\matPhi}{\Upphi}
\safemath{\matPi}{\Uppi}
\safemath{\matPsi}{\Uppsi}
\safemath{\matSigma}{\Upsigma}
\safemath{\matTheta}{\Uptheta}
\safemath{\matUpsilon}{\Upupsilon}
\safemath{\matXi}{\Upxi}
\safemath{\matidentity}{\matI}
\safemath{\matone}{\matO}
\safemath{\rnda}{\bmia}
\safemath{\rndb}{\bmib}
\safemath{\rndc}{\bmic}
\safemath{\rndd}{\bmid}
\safemath{\rnde}{\bmie}
\safemath{\rndf}{\bmif}
\safemath{\rndg}{\bmig}
\safemath{\rndh}{\bmih}
\safemath{\rndi}{\bmii}
\safemath{\rndj}{\bmij}
\safemath{\rndk}{\bmik}
\safemath{\rndl}{\bmil}
\safemath{\rndm}{\bmim}
\safemath{\rndn}{\bmin}
\safemath{\rndo}{\bmio}
\safemath{\rndp}{\bmip}
\safemath{\rndq}{\bmiq}
\safemath{\rndr}{\bmir}
\safemath{\rnds}{\bmis}
\safemath{\rndt}{\bmit}
\safemath{\rndu}{\bmiu}
\safemath{\rndv}{\bmiv}
\safemath{\rndw}{\bmiw}
\safemath{\rndx}{\bmix}
\safemath{\rndy}{\bmiy}
\safemath{\rndz}{\bmiz}
\safemath{\rndA}{\scA}
\safemath{\rndB}{\scB}
\safemath{\rndC}{\scC}
\safemath{\rndD}{\scD}
\safemath{\rndE}{\scE}
\safemath{\rndF}{\scF}
\safemath{\rndG}{\scG}
\safemath{\rndH}{\scH}
\safemath{\rndI}{\scI}
\safemath{\rndJ}{\scJ}
\safemath{\rndK}{\scK}
\safemath{\rndL}{\scL}
\safemath{\rndM}{\scM}
\safemath{\rndN}{\scN}
\safemath{\rndO}{\scO}
\safemath{\rndP}{\scP}
\safemath{\rndQ}{\scQ}
\safemath{\rndR}{\scR}
\safemath{\rndS}{\scS}
\safemath{\rndT}{\scT}
\safemath{\rndU}{\scU}
\safemath{\rndV}{\scV}
\safemath{\rndW}{\scW}
\safemath{\rndX}{\scX}
\safemath{\rndY}{\scY}
\safemath{\rndZ}{\scZ}
\safemath{\rndalpha}{\bmialpha}
\safemath{\rndbeta}{\bmibeta}
\safemath{\rndchi}{\bmichi}
\safemath{\rnddelta}{\bmidelta}
\safemath{\rndepsilon}{\bmiepsilon}
\safemath{\rndvarepsilon}{\bmivarepsilon}
\safemath{\rndeta}{\bmieta}
\safemath{\rndgamma}{\bmigamma}
\safemath{\rndiota}{\bmiiota}
\safemath{\rndkappa}{\bmikappa}
\safemath{\rndlambda}{\bmilambda}
\safemath{\rndmu}{\bmimu}
\safemath{\rndnu}{\bminu}
\safemath{\rndomega}{\bmiomega}
\safemath{\rndphi}{\bmiphi}
\safemath{\rndvarphi}{\bmivarphi}
\safemath{\rndpi}{\bmipi}
\safemath{\rndvarpi}{\bmivarpi}
\safemath{\rndpsi}{\bmipsi}
\safemath{\rndrho}{\bmirho}
\safemath{\rndvarrho}{\bmivarrho}
\safemath{\rndsigma}{\bmisigma}
\safemath{\rndvarsigma}{\bmivarsigma}
\safemath{\rndtau}{\bmitau}
\safemath{\rndtheta}{\bmitheta}
\safemath{\rndvartheta}{\bmivartheta}
\safemath{\rndupsilon}{\bmiupsilon}
\safemath{\rndxi}{\bmixi}
\safemath{\rndzeta}{\bmizeta}
\safemath{\rveca}{\bmua}
\safemath{\rvecb}{\bmub}
\safemath{\rvecc}{\bmuc}
\safemath{\rvecd}{\bmud}
\safemath{\rvece}{\bmue}
\safemath{\rvecf}{\bmuf}
\safemath{\rvecg}{\bmug}
\safemath{\rvech}{\bmuh}
\safemath{\rveci}{\bmui}
\safemath{\rvecj}{\bmuj}
\safemath{\rveck}{\bmuk}
\safemath{\rvecl}{\bmul}
\safemath{\rvecm}{\bmum}
\safemath{\rvecn}{\bmun}
\safemath{\rveco}{\bmuo}
\safemath{\rvecp}{\bmup}
\safemath{\rvecq}{\bmuq}
\safemath{\rvecr}{\bmur}
\safemath{\rvecs}{\bmus}
\safemath{\rvect}{\bmut}
\safemath{\rvecu}{\bmuu}
\safemath{\rvecv}{\bmuv}
\safemath{\rvecw}{\bmuw}
\safemath{\rvecx}{\bmux}
\safemath{\rvecy}{\bmuy}
\safemath{\rvecz}{\bmuz}
\safemath{\rvecalpha}{\bmualpha}
\safemath{\rvecbeta}{\bmubeta}
\safemath{\rvecchi}{\bmuchi}
\safemath{\rvecdelta}{\bmudelta}
\safemath{\rvecepsilon}{\bmuepsilon}
\safemath{\rvecvarepsilon}{\bmuvarepsilon}
\safemath{\rveceta}{\bmueta}
\safemath{\rvecgamma}{\bmugamma}
\safemath{\rveciota}{\bmuiota}
\safemath{\rveckappa}{\bmukappa}
\safemath{\rveclambda}{\bmulambda}
\safemath{\rvecmu}{\bmumu}
\safemath{\rvecnu}{\bmunu}
\safemath{\rvecomega}{\bmuomega}
\safemath{\rvecphi}{\bmuphi}
\safemath{\rvecvarphi}{\bmuvarphi}
\safemath{\rvecpi}{\bmupi}
\safemath{\rvecvarpi}{\bmuvarpi}
\safemath{\rvecpsi}{\bmupsi}
\safemath{\rvecrho}{\bmurho}
\safemath{\rvecvarrho}{\bmuvarrho}
\safemath{\rvecsigma}{\bmusigma}
\safemath{\rvecvarsigma}{\bmuvarsigma}
\safemath{\rvectau}{\bmutau}
\safemath{\rvectheta}{\bmutheta}
\safemath{\rvecvartheta}{\bmuvartheta}
\safemath{\rvecupsilon}{\bmuupsilon}
\safemath{\rvecxi}{\bmuxi}
\safemath{\rveczeta}{\bmuzeta}
\safemath{\rmatA}{\bmuA}
\safemath{\rmatB}{\bmuB}
\safemath{\rmatC}{\bmuC}
\safemath{\rmatD}{\bmuD}
\safemath{\rmatE}{\bmuE}
\safemath{\rmatF}{\bmuF}
\safemath{\rmatG}{\bmuG}
\safemath{\rmatH}{\bmuH}
\safemath{\rmatI}{\bmuI}
\safemath{\rmatJ}{\bmuJ}
\safemath{\rmatK}{\bmuK}
\safemath{\rmatL}{\bmuL}
\safemath{\rmatM}{\bmuM}
\safemath{\rmatN}{\bmuN}
\safemath{\rmatO}{\bmuO}
\safemath{\rmatP}{\bmuP}
\safemath{\rmatQ}{\bmuQ}
\safemath{\rmatR}{\bmuR}
\safemath{\rmatS}{\bmuS}
\safemath{\rmatT}{\bmuT}
\safemath{\rmatU}{\bmuU}
\safemath{\rmatV}{\bmuV}
\safemath{\rmatW}{\bmuW}
\safemath{\rmatX}{\bmuX}
\safemath{\rmatY}{\bmuY}
\safemath{\rmatZ}{\bmuZ}
\safemath{\rmatDelta}{\bmuDelta}
\safemath{\rmatGamma}{\bmuGamma}
\safemath{\rmatLambda}{\bmuLambda}
\safemath{\rmatOmega}{\bmuOmega}
\safemath{\rmatPhi}{\bmuPhi}
\safemath{\rmatPi}{\bmuPi}
\safemath{\rmatPsi}{\bmuPsi}
\safemath{\rmatSigma}{\bmuSigma}
\safemath{\rmatTheta}{\bmuTheta}
\safemath{\rmatUpsilon}{\bmuUpsilon}
\safemath{\rmatXi}{\bmuXi}
\safemath{\rndvecU}{\bmiU} 
\safemath{\rndvecW}{\bmiW} 
\safemath{\rndvecV}{\bmiV} 
\safemath{\rndvecX}{\bmiX} 
\safemath{\rndvecY}{\bmiY} 
\safemath{\rndvecC}{\bmiC} 
\newenvironment{textbmatrix}{	\setlength{\arraycolsep}{2.5pt}%
								\big[\begin{matrix}}{\end{matrix}\big]%
								\raisebox{0.08ex}{\vphantom{M}}}
\def\be{\begin{equation}}
\def\ee{\end{equation}}
\def\een{\nonumber \end{equation}}
\def\mat{\begin{bmatrix}}
\def\emat{\end{bmatrix}}
\def\btm{\begin{textbmatrix}}
\def\etm{\end{textbmatrix}}
\def\ba#1\ea{\begin{align}#1\end{align}}
\def\bas#1\eas{\begin{align*}#1\end{align*}}
\def\bs#1\es{\begin{split}#1\end{split}} 
\def\bg#1\eg{\begin{gather}#1\end{gather}}
\def\bml#1\eml{\begin{multline}#1\end{multline}}
\def\bi#1\ei{\begin{itemize}#1\end{itemize}} 
\def\bipi#1\eipi{\begin{inparaitem}#1\end{inparaitem}}
\newcommand{\lefto}{\mathopen{}\left}
\DeclareMathOperator{\adj}{adj}				
\DeclareMathOperator*{\argmin}{arg\;min}		
\DeclareMathOperator{\Exop}{\opE}			
\safemath{\fun}{\scf}						
\safemath{\altfun}{\scg}
\safemath{\aaltfun}{\sch}
\safemath{\bel}{\sce}					
\safemath{\altbel}{\sce}					
\safemath{\frel}{g}					
\safemath{\altfrel}{g}					
\safemath{\dfrel}{\tilde{g}}					
\safemath{\altdfrel}{\tilde{g}}					
\newcommand{\nullspace}{\setN}	 			
\newcommand{\Ex}[2]{\ensuremath{\Exop_{#1}\lefto[#2\right]}} 	
\newcommand{\bigEx}[2]{\ensuremath{\Exop_{#1} \! \bigl[#2\bigr]}} 	
\newcommand{\card}[1]{\left\lvert#1\right\rvert}			
\newcommand{\conj}[1]{\ensuremath{\overline{#1}}} 	
\newcommand{\inv}[1]{\ensuremath{#1^{-1}}} 	
\safemath{\dirac}{\delta}					
\safemath{\diracp}{\dirac(\time)}			
\safemath{\krond}{\dirac}					
\newcommand{\set}[1]{\left\{#1\right\}}		
\safemath{\upto}{\uparrow}
\safemath{\downto}{\downarrow}
\safemath{\iu}{i}							
\safemath{\maj}{\succ}
\newcommand{\dftmat}[1]{\matF_{#1}}			
\safemath{\mdft}{\dftmat{}}					
\safemath{\runity}{\beta}					
\safemath{\eval}{\lambda}					
\safemath{\veval}{\veclambda}				
\safemath{\littleo}{\sco}					
\let\im\undefined
\safemath{\re}{\mathfrak{Re}}				
\safemath{\im}{\mathfrak{Im}}				
\safemath{\euclidspace}{\complexset}			
\safemath{\confunspace}{\setC}				
\newcommand{\banachseqspace}[1]{l^{#1}}		
\safemath{\hilseqspace}{\banachseqspace{2}}	
\newcommand{\banachfunspace}[1]{\setL^{#1}}	
\safemath{\hilfunspace}{\banachfunspace{2}}	
\safemath{\schwarzspace}{\setS}				
\newcommand{\hadj}[1]{#1^{\star}}			
\safemath{\SNR}{\text{\sc snr}} 				
\safemath{\No}{N_0}							
\safemath{\Es}{E_s}							
\safemath{\Eb}{E_b}							
\safemath{\EbNo}{\frac{\Eb}{\No}}
\safemath{\EsNo}{\frac{\Es}{\No}}
\let\time\undefined
\safemath{\time}{\sct}						
\safemath{\dtime}{\sck}						
\safemath{\delay}{\sctau}					
\safemath{\ddelay}{\scl}						
\safemath{\doppler}{\scnu}					
\safemath{\ddoppler}{\scm}					
\safemath{\freq}{\scf}						
\safemath{\dfreq}{\scn}						
\safemath{\Dtime}{\Delta\time}
\safemath{\Dfreq}{\Delta\freq}
\safemath{\Ddtime}{\Delta\dtime}
\safemath{\Ddfreq}{\Delta\dfreq}
\safemath{\bandwidth}{\scB}
\safemath{\maxdoppler}{\doppler_{0}}			
\safemath{\maxdelay}{\delay_{0}}				
\safemath{\spread}{\Delta_{\CHop}}			
\DeclareMathOperator{\CHop}{\ensuremath{\opH}} 
\safemath{\kernel}{\rndk_{\CHop}}			
\safemath{\kernelp}{\kernel(\time,\time')}	
\safemath{\tvir}{\rndh_{\CHop}}				
\safemath{\tvirp}{\tvir(\time,\delay)}		
\safemath{\tvirc}{\conj{\rndh}_{\CHop}}		
\safemath{\tvtf}{\rndl_{\CHop}}				
\safemath{\tvtfp}{\tvtf(\time,\freq)}			
\safemath{\tvtfc}{\conj{\rndl}_{\CHop}}		
\safemath{\spf}{\rnds_{\CHop}}				
\safemath{\spfp}{\spf(\doppler,\delay)}		
\safemath{\spfc}{\conj{\rnds}_{\CHop}}		
\safemath{\bff}{\rndb_{\CHop}}				
\safemath{\bffp}{\bff(\doppler,\freq)}		
\safemath{\irc}{\scr_{\rndh}}				
\safemath{\tfc}{\scr_{\rndl}}				
\safemath{\spc}{\scr_{\rnds}}				
\safemath{\bfc}{\scr_{\rndb}}				
\safemath{\scaf}{\scc_{\rnds}}				
\safemath{\scafp}{\scaf(\doppler,\delay)}		
\safemath{\ccf}{\scc_{\rndl}}				
\safemath{\ccfp}{\ccf(\Dtime,\Dfreq)}			
\safemath{\cic}{\scc_{\rndh}}				
\safemath{\cicp}{\cic(\Dtime,\delay)}			
\safemath{\mi}{\scI}							
\safemath{\capacity}{\scC}					
\DeclareMathOperator{\Prob}{\opP}		
\safemath{\normal}{\mathcal{N}}			
\safemath{\jpg}{\mathcal{CN}}			
\safemath{\mchain}{\leftrightarrow}		
\safemath{\dB}{\,\mathrm{dB}}
\safemath{\dBm}{\,\mathrm{dBm}}
\safemath{\Hz}{\,\mathrm{Hz}}
\safemath{\kHz}{\,\mathrm{kHz}}
\safemath{\MHz}{\,\mathrm{MHz}}
\safemath{\GHz}{\,\mathrm{GHz}}
\safemath{\s}{\,\mathrm{s}}
\safemath{\ms}{\,\mathrm{ms}}
\safemath{\mus}{\,\mathrm{\text{\textmu}s}}
\safemath{\ns}{\,\mathrm{ns}}
\safemath{\ps}{\,\mathrm{ps}}
\safemath{\meter}{\,\mathrm{m}}
\safemath{\mm}{\,\mathrm{mm}}
\safemath{\cm}{\,\mathrm{cm}}
\safemath{\m}{\,\mathrm{m}}
\safemath{\W}{\,\mathrm{W}}
\safemath{\mW}{\, \mathrm{mW}}
\safemath{\J}{\,\mathrm{J}}
\safemath{\K}{\,\mathrm{K}}
\safemath{\bit}{\,\mathrm{bit}}
\safemath{\nat}{\,\mathrm{nat}}
\safemath{\define}{\triangleq}					
\safemath{\equivalent}{\sim}
\safemath{\distas}{\sim}					
\safemath{\sdiff}{\Delta}				
\safemath{\reals}{\mathbb R}
\safemath{\positivereals}{\reals_{+}}
\safemath{\integers}{\mathbb Z}
\safemath{\posint}{\integers_{+}}
\safemath{\naturals}{\mathbb N}
\safemath{\posnaturals}{\naturals_{+}}
\safemath{\complexset}{\mathbb C}
\safemath{\rationals}{\mathbb Q}
\newcommand*{\fancyrefparlabelprefix}{par}		
\newcommand*{\fancyrefchalabelprefix}{cha}		
\newcommand*{\fancyrefapplabelprefix}{app}		
\newcommand*{\fancyrefthmlabelprefix}{thm}		
\newcommand*{\fancyreflemlabelprefix}{lem}		
\newcommand*{\fancyrefcorlabelprefix}{cor}		
\newcommand*{\fancyrefdeflabelprefix}{def}		
\safemath{\iSet}{\setI}
\safemath{\rel}{\bowtie}					
\safemath{\eqrel}{\sim}					
\safemath{\rlord}{\leq}					
\safemath{\slord}{<}						
\safemath{\rpord}{\preceq}				
\safemath{\rrpord}{\succeq}				
\safemath{\spord}{\prec}					
\safemath{\sig}{\sigma}					
\safemath{\metric}{d}					
\safemath{\setfun}{\Phi}					
\safemath{\measure}{\mu}					
\newcommand{\outerm}[1]{#1^{\star}}		
\newcommand{\innerm}[1]{#1_{\star}}		
\safemath{\omeasure}{\outerm{\measure}}		
\safemath{\imeasure}{\innerm{\measure}}		
\safemath{\aecol}{\colS^{\star}_{\measure}} 
\safemath{\emeasure}{\bar{\measure}_{0}}	
\safemath{\rmeasure}{\tilde{\measure}}	
\safemath{\bmeasure}{\measure_{0}}		
\safemath{\glength}{\measure_{\altfun}}	
\safemath{\lebmea}{\lambda}				
\safemath{\blebmea}{\lebmea_{0}}			
\safemath{\sfun}{s}						
\safemath{\absintspace}{\colL^{1}}		
\safemath{\sqintspace}{\colL^{2}}		
\safemath{\abssumspace}{l^{1}}		
\safemath{\sqsumspace}{l^{2}}		
\safemath{\boundspace}{l^{\infty}}	
\safemath{\sfield}{\setF}				
\safemath{\vectors}{\setV}				
\safemath{\vecspace}{(\vectors,\sfield)}	
\safemath{\linspace}{\setV}				
\safemath{\clinspace}{(\linspace,\sfield)} 
\safemath{\nspace}{\setU}				
\safemath{\metspace}{\setM}				
\safemath{\bspace}{\setB}				
\safemath{\ipspace}{\setG}				
\safemath{\hilspace}{\setH}				
\safemath{\blospace}{\setG}				
\safemath{\lop}{\opT}					
\safemath{\altlop}{\opS}					
\safemath{\nullsp}{\nullspace(\lop)}		
\safemath{\lfun}{l}						
\safemath{\altlfun}{g}					
\newcommand{\dual}[1]{#1^{'}}			
\safemath{\dsum}{\oplus}					
\safemath{\funspace}{\colL}				
\renewcommand{\adj}[1]{#1^{\times}}		
\safemath{\adjlop}{\adj{\lop}}			
\safemath{\hadjlop}{\hadj{\lop}}			
\safemath{\tow}{\xrightarrow{w}}			
\safemath{\tows}{\xrightarrow{w^{*}}}		
\safemath{\cparam}{\lambda}
\safemath{\lopl}{\lop_{\cparam}}		
\safemath{\iop}{\opI}					
\safemath{\resolop}{\opR}				
\safemath{\resolvent}{\resolop_{\cparam}(\lop)}	
\safemath{\altresolvent}{\resolop_{\cparam}(\altlop)} 
\safemath{\reset}{\setQ}
\safemath{\spectrum}{\setS}
\safemath{\resolset}{\reset(\lop)}		
\safemath{\lopspec}{\spectrum(\lop)}		
\safemath{\altlopspec}{\spectrum(\altlop)} 
\safemath{\pspec}{\spectrum_{p}(\lop)}	
\safemath{\cspec}{\spectrum_{c}(\lop)}	
\safemath{\rspec}{\spectrum_{r}(\lop)}	
\safemath{\ev}{\cparam}					
\newcommand{\specrad}[1]{r_{#1}}			
\safemath{\lopsrad}{\specrad{\lop}}		
\safemath{\pop}{\opP}					
\safemath{\specfam}{\colE}				
\safemath{\specop}{\opE_{\cparam}}		
\safemath{\altspecop}{\opE_{\mu}}		
\safemath{\vmulti}{\vecone}				
\safemath{\unitaryop}{\opU}				
\safemath{\sval}{\sigma}					
\safemath{\corrcoef}{\rho}				
\safemath{\sangle}{\theta}				
\let\time\undefined
\safemath{\iset}{\setI}				
\safemath{\shift}{\nu}
\safemath{\scale}{\alpha}
\safemath{\time}{t}
\safemath{\specfreq}{\theta}	
\newcommand{\transopgen}[1]{\opT_{#1}} 
\safemath{\transop}{\transopgen{\delay}}
\newcommand{\modopgen}[1]{\opM_{#1}}	
\safemath{\modop}{\modopgen{\shift}}
\newcommand{\dilopgen}[1]{\opD_{#1}}	
\safemath{\dilop}{\dilopgen{\scale}}
\safemath{\fram}{\setF}				
\safemath{\dfram}{\dual{\fram}}		
\safemath{\ufb}{\scB}					
\safemath{\lfb}{\scA}					
\safemath{\sop}{\hadj{\aop}}				
\safemath{\aop}{\opT}			
\safemath{\fop}{\opS}				
\safemath{\daop}{\tilde\opT}			
\safemath{\dsop}{\hadj{\tilde\opT}}				
\safemath{\ifop}{\inv{\fop}}			
\safemath{\rifop}{\fop^{-1/2}}			
\newcommand{\ft}[1]{\widehat{#1}}	
\safemath{\transeq}{\setT}			
\safemath{\nfun}{\Phi}				
\safemath{\funvec}{\vecf}			
\safemath{\altfunvec}{\vecg}
\safemath{\samplespace}{\Omega}
\safemath{\probspace}{(\samplespace,\sfield,\Prob)}	
\safemath{\ccoef}{\rho}			
\safemath{\infstate}{\vecpi}				
\safemath{\typset}{\setA_{\epsilon}^{(N)}}	
\safemath{\expequal}{\doteq}				
\safemath{\code}{C}						
\safemath{\dstringset}{\setD^{\star}}		
\safemath{\cwlength}{l}					
\safemath{\elength}{L}					
\safemath{\extension}{C^{\star}}			
\safemath{\approaches}{\rightarrow}		
\safemath{\evnt}{\setA}					
\safemath{\altevnt}{\setB}					
\safemath{\rv}{\rndx}					
\safemath{\altrv}{\rndy}					
\safemath{\complexrv}{\rndu}					
\safemath{\altcrv}{\rndv}				
\safemath{\rvec}{\rvecx}					
\safemath{\altrvec}{\rvecy}				
\safemath{\crvec}{\rvecu}				
\safemath{\altcrvec}{\rvecv}				
\safemath{\variance}{\sigma^{2}}			
\safemath{\map}{T}						
\safemath{\jacobian}{J}					
\safemath{\wvec}{\rvecw}					
\safemath{\wrv}{\rndw}					
\safemath{\orthmat}{\matQ}				
\safemath{\evmat}{\matLambda}			
\safemath{\identity}{\matidentity}		
\safemath{\innovec}{\vecv}				
\safemath{\convas}{\xrightarrow{\text{a.s.}}}	
\safemath{\convr}{\xrightarrow{\text{r}}}	
\safemath{\convp}{\xrightarrow{\text{P}}}	
\safemath{\convd}{\xrightarrow{\text{D}}}	
\safemath{\ltis}{\opL}				
\safemath{\ir}{h}					
\safemath{\tf}{\MakeUppercase{\ir}}	
\safemath{\signal}{\scx} 
\safemath{\signalFt}{\widehat{\signal}}	
\safemath{\sigTime}{\signal \! \left( \time \right)}	
\safemath{\orFt}{\widehat{\signal} \! \left(\freq\right)}	
\safemath{\sampFt}{\widehat{\signal}_{\scd} \! \left(\freq\right)}	
\safemath{\fZero}{\freq_{0}}	
\safemath{\fSamp}{\freq_{\scs}}	
\safemath{\specOc}{\setI} 
\safemath{\sampSet}{\setP} 
\safemath{\sampVal}{\signal \! \left(\time_\scn\right)}	
\safemath{\beuDen}{\setD^{-} \! \left(\sampSet\right)}	
\safemath{\samPer}{\scT_\scs}	
\safemath{\hSp}{\setH}	
\safemath{\multSig}{\setB \! \left( \specOc \right)}	
\safemath{\Ltwo}{\sqintspace \! \left(\reals\right)}	
\safemath{\cellNo}{\scL} 
\safemath{\noIn}{s} 
\safemath{\cosetNo}{\scK} 
\safemath{\sampSig}{\scX \! \left(\freq\right)} 
\newcommand{\vandEnt}[1]{\scx_{#1}} 
\safemath{\vandM}{\matV \! \left(\vandEnt{0}, \vandEnt{1}, \ldots, \vandEnt{\cosetNo - 1} \right)} 
\safemath{\sampMat}{\matA} 
\safemath{\suppXhat}{\gamma} 
\safemath{\orFtSupp}{\widehat{\signal}_{\suppXhat} \! \left(\freq\right)} 
\safemath{\maxCard}{\scC} 
\safemath{\dict}{\matD} 
\safemath{\measVec}{\vecy} 
\safemath{\sigVec}{\vecx} 
\safemath{\meas}{\scm} 
\safemath{\sigDim}{\scn} 
\safemath{\spars}{\scs} 
\safemath{\FM}{\matF}	
\safemath{\IM}{\matI}	
\safemath{\fONB}{\matA}	
\safemath{\sONB}{\matB}	
\safemath{\dictCol}{d} 
\safemath{\recSigVec}{\hat{\sigVec}} 
\safemath{\PZ}{\left( \text{P0} \right)} 
\safemath{\BP}{\left( \text{BP} \right)} 
\safemath{\PO}{\left( \text{P1} \right)} 
\safemath{\coher}{\mu} 
\safemath{\supp}{\setS}	
\safemath{\corMeas}{\vecz} 
\safemath{\erVec}{\vece} 
\safemath{\concSig}{\check{\sigVec}} 
\safemath{\fVec}{\vecp} 
\safemath{\sVec}{\vecq} 
\safemath{\conVec}{\vecv} 
\safemath{\sgnl}{\vecs} 
\safemath{\fONBCol}{\veca} 
\safemath{\sONBCol}{\vecb} 
\safemath{\fSupp}{\setP} 
\safemath{\sSupp}{\setQ} 
\safemath{\fDim}{\sigDim_{\fONBCol}} 
\safemath{\sDim}{\sigDim_{\sONBCol}} 
\safemath{\fCoher}{\coher_{\fONBCol}} 
\safemath{\sCoher}{\coher_{\sONBCol}} 
\safemath{\sigSupp}{\setX} 
\safemath{\erSupp}{\setE} 
\safemath{\PZErSup}{\left( \text{P0, } \erSupp \right)} 
\safemath{\falseSigVec}{\tilde{\sigVec}} 
\safemath{\falseErVec}{\tilde{\erVec}} 
\safemath{\sigSpars}{\sigDim_{\sigVec}} 
\safemath{\erSpars}{\sigDim_{\erVec}} 
\safemath{\BPErSup}{\left( \text{BP, } \erSupp \right)} 
\safemath{\gram}{\matG} 
\safemath{\WHT}{\scT}	
\safemath{\WHF}{\scF}	
\safemath{\prot}{\scg}	
\safemath{\vprot}{\vecg}	
\safemath{\proti}{\prot_{\scm, \scn}}	
\safemath{\prott}{\prot \! \left( \time \right) }	
\safemath{\protit}{\proti \! \left(\time\right)}	
\safemath{\vproti}{\vprot_{\scm, \scn}}	
\safemath{\dprot}{\tilde{\scg}}	
\safemath{\dvprot}{\tilde{\vecg}}	
\safemath{\dproti}{\dprot_{\scm, \scn}}	
\safemath{\dprott}{\dprot \! \left( \time \right) }	
\safemath{\dprotit}{\dproti \! \left(\time\right)}	
\safemath{\dvproti}{\dvprot_{\scm, \scn}}	
\safemath{\WO}{\scW}		
\safemath{\WOpam}{\WO^{\left( \WHT, \WHF \right)}_{\scm,\scn}}	
\newcommand{\WOind}[2]{\WO_{#1, #2}} 
\safemath{\WOsind}{\WOind \scm \scn }
\safemath{\zak}{\opZ} 
\safemath{\zakpar}{\zak^{\WHT,\WHF}} 
\safemath{\zaksig}{\zak_{\signal} \! \left( \time, \freq \right)} 
\newcommand{\zakprot}[2]{\zak_{\prot} \! \left( #1, #2 \right)} 
\safemath{\zakprots}{\zakprot{\time}{\freq}}
\safemath{\zakprotis}{\zak_{\prot_{\scm,\scn}} \! \left( \time, \freq \right)} 
\safemath{\tfr}{\scR} 
\safemath{\funmin}{\scm \! \left( \prott; \WHT \right)}	
\safemath{\funmax}{\scM \! \left( \prott; \WHT \right)}	
\safemath{\sclf}{\phi}	
\safemath{\vscf}{\vecphi}	
\newcommand{\scfa}[1]{\sclf \! \left( #1 \right)}	
\safemath{\scft}{\scfa \time}	
\newcommand{\vscfi}[2]{\vscf_{#1,#2}}	
\safemath{\vscfs}{\vscfi \scj \scn}
\safemath{\fscf}{\ft \sclf}	
\safemath{\vfscf}{\ft \vscf}	
\newcommand{\fscfa}[1]{\fscf \! \left( #1 \right)}	
\safemath{\fscff}{\fscfa \freq}	
\safemath{\wav}{\psi}	
\safemath{\vwav}{\vecpsi}	
\newcommand{\wava}[1]{\wav \! \left( #1 \right)}	
\safemath{\wavt}{\wava \time}	
\newcommand{\vwavi}[2]{\vwav_{#1,#2}}	
\safemath{\vwavs}{\vwavi \scj \scn} 
\safemath{\fwav}{\ft \wav}	
\safemath{\vfwav}{\ft \vwav}	
\newcommand{\fwava}[1]{\fwav \! \left( #1 \right)}	
\safemath{\fwavf}{\fwava \freq}	
\newcommand{\renent}[2]{\scH_{#1} \! \left( #2 \right)} 
\safemath{\oprobf}{\hat{p}}	
\safemath{\probf}{p} 
\safemath{\binind}{\setI}			
\safemath{\pool}{\bm{ \mathcal P}}			
\safemath{\poolre}{\setP}			
\safemath{\eptyp}{\setT^{(n)}_\epsilon}	
\safemath{\dist}{\mathbb{P}}			
\safemath{\tdist}{\tilde \dist}			
\newcommand{\distof}[1]{\dist \! \left[ #1 \right]}	
\safemath{\ry}{\setY}			
\safemath{\rz}{\setZ}			
\newcommand{\guess}[2]{\scG_{#1} \! \left( #2 \right)}	
\newcommand{\guessast}[2]{\scG^\ast_{#1} \! \left( #2 \right)}	
\safemath{\trho}{\frac{1}{1+\rho}}
\safemath{\tirho}{\tilde \rho}
\newtheorem{theorem}{Theorem}
\newtheorem{lemma}{Lemma}
\newtheorem{note}[theorem]{Note}
\newtheorem{fact}{Fact}
\newtheorem{corollary}[theorem]{Corollary}
\newcommand{\overbar}[1]{\mkern 1.5mu\overline{\mkern-1.5mu#1\mkern-1.5mu}\mkern 1.5mu}
\begin{document}
%
\title{Distributed Storage for Data Security}

\author{\IEEEauthorblockN{Annina Bracher}
\IEEEauthorblockA{ETH Zurich}
\and
\IEEEauthorblockN{Eran Hof}
\IEEEauthorblockA{Samsung Israel Research and Development Center}
\and
\IEEEauthorblockN{Amos Lapidoth}
\IEEEauthorblockA{ETH Zurich}}


%


\maketitle

\begin{abstract}
We study the secrecy of a distributed storage system for passwords. The encoder, Alice, observes a length-$n$ password and describes it using two hints, which she then stores in different locations. The legitimate receiver, Bob, observes both hints. The eavesdropper, Eve, sees only one of the hints; Alice cannot control which. We characterize the largest normalized (by $n$) exponent that we can guarantee for the number of guesses it takes Eve to guess the password subject to the constraint that either the number of guesses it takes Bob to guess the password or the size of the list that Bob must form to guarantee that it contain the password approach $1$ as $n$ tends to infinity. 
\end{abstract}


%
\IEEEpeerreviewmaketitle

\section{Introduction}

Suppose that some sensitive information $X$ (e.g.\ password) is drawn from a finite set $\setX$ according to some PMF $P_X$. A (stochastic) encoder, Alice, maps (possibly using randomization) $X$ to two hints $M_1$ and $M_2$, which she then stores in different locations. The hints are intended for a legitimate receiver, Bob, who knows where they are stored and has access to both. An eavesdropper, Eve, sees one of the hints but not both; we do not know which. Given some notion of ambiguity, we would ideally like Bob's ambiguity about $X$ to be small and Eve's large.

Which hint is revealed to Eve is a subtle question. We adopt a conservative approach and assume that, after observing $X$, an adversarial ''genie'' reveals to Eve the hint that minimizes her ambiguity. Not allowing the genie to observe $X$ would lead to a weaker form of secrecy (an example is given in \cite{bracherlapidoth14}).

There are several ways to define ambiguity. For example, we could require that Bob be able to reconstruct $X$ whenever $X$ is ''typical'' and that the conditional entropy of $X$ given Eve's observation be large. For some scenarios, such an approach might be inadequate. Firstly, this approach may not properly address Bob's needs when $X$ is not typical. For example, if Bob must guess $X$, this approach does not guarantee that the expected number of guesses be small: It only guarantees that the probability of success after one guess be large. It does not indicate the number of guesses that Bob might need when $X$ is atypical. Secondly, conditional entropy need not be an adequate measure of Eve's ambiguity: if $X$ is some password that Eve wishes to uncover, then we may care more about the number of guesses that Eve needs than about the conditional entropy \cite{arikanmerhav99}.

In this paper, we assume that Eve wants to guess $X$ with the least number of guesses of the form "Is X = x?". We quantify Eve's ambiguity about $X$ by the expected number of guesses that she needs to uncover $X$. In this sense, Eve faces an instance of Arikan's guessing problem \cite{arikan96}. For each possible observation $z$ in some finite set $\setZ$, Eve chooses a guessing function $\guess {} { \cdot \left| z \right.}$ from $\setX$ onto the set $\set{1, \ldots, \card \setX}$, which determines the guessing order: if Eve observes $z$, then the question ''Is $X = x$?'' will be her $\guess {}{x|z}$-th question. Eve's expected number of guesses is $\Ex {}{\guess {}{X|Z}}$. This expectation is minimized if for each $z \in \setZ$ the guessing function $\guess {} { \cdot \left| z \right.}$ orders the possible realizations of $X$ in decreasing order of their posterior probabilities given $Z = z$.

As to Bob, we will consider two different criteria: In the ''guessing version'' of the problem the criterion is the expected number of guesses it takes Bob to guess $X$, and in the ''list version'' the criterion is the first moment of the size of the list that Bob must form to guarantee that it contain $X$. We shall see that the two criteria lead to similar results.

The former criterion is natural when Bob can check whether a guess is correct: If $X$ is some password, then Bob can stop guessing as soon as he has gained access to the account that is secured by $X$.

The latter criterion is appropriate if Bob does not know whether a guess is correct. For example, if $X$ is a task that Bob must perform, then the only way for Bob to make sure that he performs $X$ is to perform all the tasks in a list comprising the tasks that have positive posterior probabilities given his observation. In this scenario, a good measure of Bob's ambiguity about $X$ is the expected number of tasks that he must perform, and this will be small whenever Alice is a good task-encoder for Bob \cite{buntelapidoth14}. To describe the list of tasks that Bob must perform more explicitly, let us denote by
\be 
\distof{M_1 = m_1, M_2 = m_2 \left| X = x \right.}, \, m_1 \in \setM_1, \, m_2 \in \setM_2
\een
the probability that Alice produces the pair of hints $\left( M_1, M_2 \right) = \left( m_1,m_2 \right)$ upon observing that $X = x$. It is 0-1 valued if Alice does not use randomization. Upon observing that $\left( M_1, M_2 \right) = \left( m_1,m_2 \right)$, Bob produces the list $\setL_{m_1,m_2}$ of all the tasks $x \in \setX$ whose posterior probability $\distof{ X = x | M_1 = m_1, M_2 = m_2 }$ is positive. Our notion of Bob's ambiguity about $X$ is $\bigEx {}{\card {\setL_{M_1,M_2}}}$.

The guessing and the list-size criterion for Bob lead to similar results in the following sense: Clearly, every guessing function $\guess {}{\cdot | M_1,M_2}$ for $X$ that maps the elements of $\setX$ that have zero posterior probability to larger values than those that have a positive posterior probability satisfies $\bigEx {}{\guess {}{X | M_1,M_2}} \leq \bigEx {}{\card {\setL_{M_1,M_2}}}$. Conversely, one can prove that every pair of ambiguities for Bob and Eve that is achievable in the ''guessing version'' is, up to polylogarithmic factors of $\card \setX$, also achievable in the ''list version'' provided that we increase $\setM_1$ or $\setM_2$ by a logarithmic factor of $\card \setX$ (see Section~\ref{sec:guessListClose} ahead). These polylogarithmic factors wash out in the asymptotic regime where the sensitive information is an $n$-tuple and $n$ tends to infinity.

With no extra effort we can generalize the model and replace expectations with $\rho$-th moments. This we do to better bring out the role of R\'{e}nyi entropy. For an arbitrary $\rho > 0$, we thus study the $\rho$-th (instead of the first) moment of the list-size and of the number of guesses. Moreover, we shall allow some side-information $Y$ that is available to all parties. We shall thus assume that $\left( X,Y \right)$ take value in the finite set $\setX \times \setY$ according to $P_{X,Y}$.

\section{Problem Statement}

We consider two problems, which we call the ''guessing version'' and the ''list version''. They differ in the definition of Bob's ambiguity. In both versions a pair $\left( X,Y \right)$ is drawn from the finite set $\setX \times \setY$ according to the PMF $P_{X,Y}$, and $\rho > 0$ is fixed. Upon observing $\left( X,Y \right) = \left( x,y \right)$, Alice draws the hints $M_1$ and $M_2$ from the finite set $\setM_1 \times \setM_2$ according to some PMF
\be 
\distof{M_1 = m_1, M_2 = m_2 \left| X = x, Y = y \right.}. \label{eq:aliceEncPMF}
\ee
In the ''guessing version'' Bob's ambiguity about $X$ is
\be 
\mathscr A^{(\text g)}_{\text{B}} \! \left( P_{X,Y} \right) = \min_{G} \bigEx {}{\guess {}{X | Y,M_1,M_2}^\rho}. \label{eq:bobAmbiguityGuessing}
\ee
In the ''list version'' Bob's ambiguity about $X$ is
\be 
\mathscr A^{(\text l)}_{\text{B}} \! \left( P_{X,Y} \right) = \bigEx {}{| \setL^Y_{M_1,M_2} |^\rho}, \label{eq:bobAmbiguityList}
\ee
where for all $y \in \setY$ and $\left( m_1,m_2 \right) \in \setM_1 \times \setM_2$
\be 
\setL^y_{m_1,m_2} =  \set{x \colon \distof{ X = x \left| Y = y, M_1 = m_1, M_2 = m_2 \right.} > 0}
\een
is the list of all the tasks whose posterior probability
\ba
&\distof{ X = x \left| Y = y, M_1 = m_1, M_2 = m_2 \right.} \nonumber \\
&\,= \! \frac{P_{X,Y} \! \left( x,y \right) \distof{M_1 \! = \! m_1, M_2 \! = \! m_2 \left| X \! = \! x, Y \! = \! y \right.}}{\sum_{\tilde x} P_{X,Y} \! \left( \tilde x,y \right) \distof{M_1 \! = \! m_1, M_2 \! = \! m_2 \left| X \! = \! \tilde x, Y \! = \! y \right.}}
\ea
is positive. In both versions Eve's ambiguity about $X$ is
\be 
\mathscr A_{\text{E}} \! \left( P_{X,Y} \right) \! = \! \min_{G_1,G_2} \! \bigEx {}{\guess 1 { X \left| Y, M_1 \right. }^\rho \! \wedge \guess 2 { X \left| Y, M_2 \right. }^\rho}, \label{eq:distEncSecrecyMeasure}
\ee
where $\alpha \wedge \beta$ denotes the minimum of $\alpha$ and $\beta$.

Optimizing over Alice's mapping, i.e., the choice of the conditional PMF in \eqref{eq:aliceEncPMF}, we wish to characterize the largest ambiguity that we can guarantee that Eve will have subject to a given upper bound on the ambiguity that Bob may have.

Of special interest to us is the asymptotic regime where $\left( X,Y \right)$ is an $n$-tuple (not necessarily drawn IID), and where $$\setM_1 = \left\{ 1, \ldots, 2^{n R_1} \right\}, \, \setM_2 = \left\{ 1, \ldots, 2^{n R_2} \right\},$$ where $\left( R_1,R_2 \right)$ is a nonnegative pair corresponding to the rate. For both versions of the problem, we shall characterize the largest exponential growth that we can guarantee for Eve's ambiguity subject to the constraint that Bob's ambiguity tend to one. This asymptote turns out not to depend on the version of the problem, and in the asymptotic analysis $\mathscr A_{\text{B}}$ can stand for either $\mathscr A_{\text{B}}^{(\text g)}$ or $\mathscr A_{\text{B}}^{(\text l)}$.

To phrase this mathematically, let us introduce the stochastic process $\set{\left(X_i,Y_i\right)}_{i \in \naturals}$ with finite alphabet $\setX \times \setY$. We denote by $P_{X^n,Y^n}$ the PMF of $\left( X^n,Y^n \right)$. For a nonnegative rate-pair $\left( R_1,R_2 \right)$, we call $E_{\text{E}}$ an \emph{achievable ambiguity-exponent} if there is a sequence of stochastic encoders such that Bob's ambiguity (which is always at least $1$) satisfies
\ba 
&\lim_{n \rightarrow \infty} \mathscr A_{\text{B}} \! \left( P_{X^n,Y^n} \right) = 1, \label{eq:bobAmbTo1}
\ea
and such that Eve's ambiguity satisfies
\ba 
\liminf_{n \rightarrow \infty} \frac{\log \! \left( \mathscr A_{\text{E}} \! \left( P_{X^n,Y^n} \right) \right)}{n} &\geq E_{\text{E}}. \label{eq:eveGuessBobAmbTo1}
\ea
We shall characterize the supremum $\overbar{E_{\text{E}}}$ of all achievable ambiguity-exponents, which we call \emph{privacy-exponent}. If \eqref{eq:bobAmbTo1} cannot be satisfied, then the set of achievable ambiguity-exponents is empty, and we say that the privacy-exponent is negative infinity.

\section{Main Results}

To describe our results, we shall need a conditional version of R\'enyi entropy (originally proposed by Ariomoto \cite{arimoto77} and also studied in \cite{buntelapidoth14})
\be 
\renent {\alpha}{ X | Y } = \frac{\alpha}{1 - \alpha} \log \! \sum_{y \in \setY} \Bigl( \sum_{x \in \setX} P_{X,Y} \! \left( x,y \right)^{\alpha} \Bigr)^{1/\alpha},
\ee
where $\alpha \in \left[ 0,\infty \right]$ is the order and where the cases where $\alpha$ is $0$, $1$, or $\infty$ are treated by a limiting argument. In addition, we shall need the notion of conditional R\'enyi entropy-rate: Let $\set{\left(X_i,Y_i\right)}_{i \in \naturals}$ be a discrete-time stochastic process with finite alphabet $\setX \times \setY$. Whenever the limit as $n$ tends to infinity of $\renent {\alpha}{ X^n | Y^n } / n$ exists, we denote it by $\renent {\alpha}{ \rndvecX | \rndvecY }$ and call it conditional R\'enyi entropy-rate. In this paper, $\alpha = 1 / (1+\rho)$ takes value in the set $\left( 0,1 \right)$. To simplify notation, we henceforth write $\tirho$ for $1 / (1+\rho)$ and $\alpha \vee \beta$ for the maximum of $\alpha$ and $\beta$.

\subsection{Finite Blocklength Results}

In the next two theorems $c_s$ is related to how much can be gleaned about $X$ from $( M_1,M_2)$ but not from one hint alone; $c_1$ is related to how much can be gleaned from $M_1$; and $c_2$ is related to how much can be gleaned from $M_2$. More precisely, we shall see in Section~\ref{sec:proofs} ahead that Alice first maps $( X,Y )$ to the triple $( V_s, V_1, V_2 )$, which takes value in a set $\setV_s \times \setV_1 \times \setV_2$ of size $c_s c_1 c_2$. Independently of $( X,Y )$ she then draws a (one-time-pad like) random variable $U$ uniformly over $\setV_s$ and maps $( U, V_s )$ to a variable $\tilde V_s$ choosing the (XOR like) mapping so that $V_s$ can be recovered from $( \tilde V_s, U )$ while $\tilde V_s$ alone is independent of $( X,Y )$. The hints are $M_1 \! = \! ( \tilde V_s, V_1 )$ and $M_2 \! = \! ( U, V_2 )$. Alice does not use randomization if $c_s = 1$.

\begin{theorem}[Finite Blocklength Guessing Version]\label{th:distStorGuess}
For every triple $\left( c_s, c_1, c_2 \right) \in \naturals^3$ satisfying
\ba
&c_s \! \leq \! \card {\setM_1} \! \wedge \! \card {\setM_2},\, c_1 \! \leq \! \left\lfloor \card {\setM_1} \! / c_s \right\rfloor, \, c_2 \! \leq \! \left\lfloor \card {\setM_2} \! / c_s \right\rfloor, \label{bl:condCsC1C2Guess}
\ea
there is a choice of the conditional PMF in \eqref{eq:aliceEncPMF} for which Bob's ambiguity about $X$ is upper-bounded by
\ba 
\mathscr A_{\text{B}}^{(\text g)} \! \left( P_{X,Y} \right) < 1 + 2^{\rho \left( \renent {\tirho} {X \left| Y \right.} - \log \! \left( c_s c_1 c_2 \right) + 1 \right)}, \label{eq:BobMomDistStorDirGuess}
\ea
and Eve's ambiguity about $X$ is lower-bounded by
\ba 
&\mathscr A_{\text{E}} \! \left( P_{X,Y} \right) \geq \left( 1 + \ln \! \card \setX \right)^{-\rho} 2^{\rho \left( \renent {\tirho} { X \left| Y \right.} - \log \! \left( c_1 + c_2 \right) \right)}. \label{eq:EveMomDistStorDirGuess}
\ea
Conversely, for every conditional PMF, Bob's ambiguity is lower-bounded by
\ba
\!\!\!\! \mathscr A_{\text{B}}^{(\text g)} \!\! \left( P_{X,Y} \right) \! \geq \! \left( 1 \! + \! \ln \! \card \setX \right)^{\!-\!\rho} \! 2^{\rho \left( \renent {\tirho} {X \left| Y \right.} - \log \! \card {\setM_1} \card {\setM_2} \right)} \!\! \vee \! 1, \label{eq:BobMomDistStorConvGuess}
\ea
and Eve's ambiguity is upper-bounded by
\ba 
&\! \mathscr A_{\text{E}} \! \left( P_{X,Y} \right) \! \leq \! \left( \card {\setM_1}^\rho \!\! \wedge \! \card {\setM_2}^\rho \right) \! \mathscr A^{(\text g)}_{\text B} \! \left( P_{X,Y} \right) \! \wedge \! 2^{\rho \renent {\tirho} {X \left| Y \right.}}. \label{eq:EveMomDistStorConvGuess}
\ea
\end{theorem}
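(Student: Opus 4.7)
The theorem contains four bounds—direct and converse for both Bob and Eve—and my overall strategy is to implement the encoder blueprint outlined immediately before the theorem and then invoke Arikan-type bounds for guessing with side information, together with Arimoto's conditional R\'enyi entropy and the standard chain inequality $\renent{\tirho}{X|Y,Z} \geq \renent{\tirho}{X|Y} - \log \card Z$ (valid for $\tirho \in (0,1)$).

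For achievability I would fix the stochastic encoder as follows. For each $y \in \setY$, sort $\setX$ by decreasing conditional probability $\distof{X = x | Y = y}$ and partition it into $c_s c_1 c_2$ consecutive bins labelled by triples $(v_s, v_1, v_2) \in \setV_s \times \setV_1 \times \setV_2$ with $\card{\setV_s} = c_s$ and $\card{\setV_i} = c_i$ for $i \in \{1,2\}$, yielding a deterministic map $(X, Y) \mapsto (V_s, V_1, V_2)$. Draw $U$ uniformly on $\setV_s$ independently of $(X,Y)$, set $\tilde V_s = V_s \oplus U$ (addition modulo $c_s$), and output $M_1 = (\tilde V_s, V_1)$, $M_2 = (U, V_2)$; the sizes $c_s c_i$ fit into $\setM_i$ by \eqref{bl:condCsC1C2Guess}. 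Bob recovers $U$ from $M_2$ and then $V_s = \tilde V_s \ominus U$ from $M_1$, so the triple $(V_s, V_1, V_2)$ is a deterministic function of $(M_1, M_2)$. Arikan's direct bound for guessing $X$ with side information $(Y, V_s, V_1, V_2)$ produces an upper bound of the form $1 + 2^{\rho(\renent{\tirho}{X | Y, V_s, V_1, V_2} + 1)}$, and the sorted-bin chain inequality yields $\renent{\tirho}{X | Y, V_s, V_1, V_2} \leq \renent{\tirho}{X | Y} - \log(c_s c_1 c_2)$, which is \eqref{eq:BobMomDistStorDirGuess}.

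For the direct Eve bound \eqref{eq:EveMomDistStorDirGuess}, the one-time pad is the crux. Since $U$ is uniform on $\setV_s$ and independent of $(X,Y)$, so is $\tilde V_s = V_s \oplus U$; consequently, for every $(\tilde v_s, v_1)$ and every $(u, v_2)$,
\begin{align*}
\distof{X = x | Y = y, M_1 = (\tilde v_s, v_1)} &= \distof{X = x | Y = y, V_1 = v_1}, \\
\distof{X = x | Y = y, M_2 = (u, v_2)} &= \distof{X = x | Y = y, V_2 = v_2}.
\end{align*}
Hence, without loss of generality, Eve's optimal $G_i$ depends on $M_i$ only through $V_i$. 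For any such pair $(G_1, G_2)$ and any rule by which the genie picks $I \in \{1, 2\}$, let $Z = (I, V_I)$; then $Z$ takes values in the disjoint union $\setV_1 \sqcup \setV_2$ of size at most $c_1 + c_2$, and $\tilde G(x | y, i, v) \define G_i(x | y, v)$ is a bona-fide guessing function for $X$ given $(Y, Z)$ whose $\rho$-th moment equals $\mathscr A_{\text E}$. Arikan's converse bound gives $\mathscr A_{\text E} \geq (1 + \ln \card \setX)^{-\rho} 2^{\rho \renent{\tirho}{X | Y, Z}}$, and the chain inequality $\renent{\tirho}{X | Y, Z} \geq \renent{\tirho}{X | Y} - \log(c_1 + c_2)$ finishes the proof. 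The main conceptual obstacle is precisely recognizing this disjoint-union structure: without the one-time pad, Eve's effective observation would live in the product $\setV_1 \times \setV_2$, losing $\log(c_1 c_2)$ rather than $\log(c_1 + c_2)$ in the exponent.

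The converse bounds are more routine. Bob's lower bound \eqref{eq:BobMomDistStorConvGuess} follows from Arikan's converse with side information $(Y, M_1, M_2)$ combined with $\renent{\tirho}{X | Y, M_1, M_2} \geq \renent{\tirho}{X | Y} - \log \card{\setM_1} \card{\setM_2}$ and the trivial bound $\mathscr A_{\text B}^{(\text g)} \geq 1$. For Eve's upper bound \eqref{eq:EveMomDistStorConvGuess} I would exhibit two strategies and let the genie pick the better. First, given $M_i$, Eve exhaustively enumerates the $\card{\setM_{3-i}}$ values of the missing hint and, for each, runs Bob's optimal guesser on $(Y, M_1, M_2)$, reaching $X$ within $\card{\setM_{3-i}} G^{\text B}(X | Y, M_1, M_2)$ guesses, so $\mathscr A_{\text E} \leq \card{\setM_{3-i}}^\rho \mathscr A_{\text B}^{(\text g)}$; minimizing over $i$ yields $(\card{\setM_1}^\rho \wedge \card{\setM_2}^\rho) \mathscr A_{\text B}^{(\text g)}$. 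Second, Eve discards her hint and applies Arikan's direct bound with side information $Y$ alone, giving $\mathscr A_{\text E} \leq 2^{\rho \renent{\tirho}{X | Y}}$. Taking the minimum of the two gives \eqref{eq:EveMomDistStorConvGuess}.
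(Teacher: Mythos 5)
Your converse direction is sound and matches the paper in spirit; you bound Bob via Arikan's converse with side information $(Y,M_1,M_2)$ and the Arimoto chain inequality $\renent{\tirho}{X|Y,M_1,M_2}\geq\renent{\tirho}{X|Y}-\log\card{\setM_1}\card{\setM_2}$ (the paper instead routes through the guessing-moment version of the same fact, Corollary~\ref{co:impGuess}, but both are valid), and you obtain Eve's upper bound from the two strategies exactly as the paper does. The achievability direction, however, has two genuine gaps.

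\textbf{Bob's direct bound.} You invoke a ``sorted-bin chain inequality'' $\renent{\tirho}{X|Y,V_s,V_1,V_2}\leq\renent{\tirho}{X|Y}-\log(c_sc_1c_2)$. This inequality is false in general: for $\tirho\in(0,1)$ Arimoto's conditional R\'enyi entropy is nonnegative, whereas the right-hand side is negative whenever $\log(c_sc_1c_2)>\renent{\tirho}{X|Y}$ --- a regime that condition~\eqref{bl:condCsC1C2Guess} does not exclude (indeed the theorem must cover it, which is why the bound~\eqref{eq:BobMomDistStorDirGuess} is of the form $1+2^{\rho(\cdot)}$ rather than $2^{\rho(\cdot)}$). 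The paper circumvents this by never passing through a R\'enyi-entropy chain rule for the side-information: Lemma~\ref{le:ImproveGuess} shows that a well-chosen deterministic $Z$ with $\card\setZ=c_sc_1c_2$ values makes the optimal guessing moment equal to $\Ex{}{\lceil\guessast{}{X|Y}/\card\setZ\rceil^\rho}$, and then the elementary bound $\lceil\xi\rceil^\rho<1+2^\rho\xi^\rho$ together with Arikan's direct bound $\Ex{}{\guessast{}{X|Y}^\rho}\leq 2^{\rho\renent{\tirho}{X|Y}}$ gives \eqref{eq:BobMomDistStorDirGuess}; the ceiling absorbs the small-$\renent{\tirho}{X|Y}$ regime correctly.

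\textbf{Eve's direct bound.} You assert ``without loss of generality, Eve's optimal $G_i$ depends on $M_i$ only through $V_i$.'' This is the crux of your argument and it is not justified. It is true that the \emph{posterior} of $X$ given $(Y,M_i)$ depends on $M_i$ only through $V_i$, so the optimal guessing function for the \emph{single} criterion $\min_{G_i}\Ex{}{G_i^\rho}$ has this form. But Eve's ambiguity is the \emph{joint} minimum $\min_{G_1,G_2}\Ex{}{(G_1\wedge G_2)^\rho}$, and this objective does not separate: Eve can in principle gain by coordinating her two guessers through the first components of $(M_1,M_2)$, which are correlated via $U$. Since the restricted strategies form a subclass, the unrestricted minimum is in general no larger than the restricted one, so lower-bounding the restricted minimum does not lower-bound $\mathscr A_{\text E}$. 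Moreover, once $G_i$ is allowed to depend on the full $M_i$, the genie's choice $I$ depends on $U$, so $Z=(I,V_I)$ is no longer well-defined as a function on which Eve could base a guessing order (you would need $Z=(I,M_I)$, which takes $c_s(c_1+c_2)$ values and leaves you a factor $c_s^{-\rho}$ short). The paper's way around this is precisely to enlarge the guessing target to $(X,U)$ --- permissible because $U$ is computable from $(X,M_i)$ --- and take $Z=(I,M_I)$ of size $c_s(c_1+c_2)$; the surplus $\log c_s$ in $\card\setZ$ then cancels exactly against $\renent{\tirho}{X,U|Y}=\renent{\tirho}{X|Y}+\log c_s$. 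You would need either to prove the asserted WLOG (which I do not think is straightforward) or adopt the $(X,U)$ device.
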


\begin{theorem}[Finite Blocklength List Version]\label{th:distStor} \hspace{-2mm} If $| \! \setM_1 \! | | \! \setM_2 \! | \!\! > \log \! \card \setX + 2$, then for every triple $\left( c_s, c_1, c_2 \right) \in \naturals^3$ satisfying
\begin{subequations}\label{bl:condCsC1C2}
\ba
&c_s \! \leq \! \card {\setM_1} \! \wedge \! \card {\setM_2},\, c_1 \! \leq \! \left\lfloor \card {\setM_1} \! / c_s \right\rfloor, \, c_2 \! \leq \! \left\lfloor \card {\setM_2} \! / c_s \right\rfloor, \\
&c_s c_1 c_2 > \log \! \card \setX + 2, \label{eq:condCsC1C2List}
\ea
\end{subequations}
there is a choice of the conditional PMF in \eqref{eq:aliceEncPMF} for which Bob's ambiguity about $X$ is upper-bounded by
\ba 
\!\! \mathscr A_{\text{B}}^{(\text l)} \! \left( P_{X,Y} \right) < 1 + 2^{\rho \left( \renent {\tirho} {X \left| Y \right.} - \log \! \left( c_s c_1 c_2 - \log \! \card \setX - 2 \right) + 2 \right)}, \label{eq:BobMomDistStorDir}
\ea
and Eve's ambiguity about $X$ is lower-bounded by
\ba 
&\mathscr A_{\text{E}} \! \left( P_{X,Y} \right) \geq \left( 1 + \ln \! \card \setX \right)^{-\rho} 2^{\rho ( \renent {\tirho} { X \left| Y \right.} - \log \! \left( c_1 + c_2 \right) )}. \label{eq:EveMomDistStorDir}
\ea 
Conversely, for every conditional PMF, Bob's ambiguity is lower-bounded by
\ba
\mathscr A_{\text{B}}^{(\text l)} \! \left( P_{X,Y} \right) \geq 2^{\rho \left( \renent {\tirho} {X \left| Y \right.} - \log \! \card {\setM_1} \card {\setM_2} \right)} \vee 1, \label{eq:BobMomDistStorConv}
\ea
and Eve's ambiguity is upper-bounded by
\ba 
&\mathscr A_{\text{E}} \! \left( P_{X,Y} \right) \! \leq \! \left( \card {\setM_1}^\rho \!\! \wedge \! \card {\setM_2}^\rho \right) \! \mathscr A^{(\text l)}_{\text B} \! \left( P_{X,Y} \right) \! \wedge \! 2^{\rho \renent {\tirho} {X \left| Y \right.}}. \label{eq:EveMomDistStorConv}
\ea
\end{theorem}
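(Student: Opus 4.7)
My plan is to prove Theorem~\ref{th:distStor} in two parts: a direct (achievability) part implementing the XOR-based construction sketched just before Theorem~\ref{th:distStorGuess}, and a converse part based on power-mean inequalities. Throughout, I work with $\tirho = 1/(1+\rho) \in (0,1)$ and exploit the identity
\ba
\renent{\tirho}{X|Y} = \frac{1}{\rho}\log\sum_{y\in\setY}\Bigl(\sum_{x\in\setX}P_{X,Y}(x,y)^{\tirho}\Bigr)^{1+\rho}.
\ea

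For achievability, the first step is to construct a partition of $\setX \times \setY$ into $c_s c_1 c_2$ bins whose $y$-slices are balanced in size over the support of $P_{X,Y}$. I would do this by random binning, assigning each $(x,y)$ an independent uniform triple $(v_s,v_1,v_2) \in \setV_s \times \setV_1 \times \setV_2$, and then derandomizing via a first-moment argument. A union bound over bins contributes a slack of $\log|\setX|+2$, which is exactly the correction inside \eqref{eq:BobMomDistStorDir} and is the source of the hypothesis \eqref{eq:condCsC1C2List}. I would then draw $U$ uniformly on $\setV_s$ independently of $(X,Y)$, endow $\setV_s$ with an abelian group structure, and set $\tilde V_s = V_s \oplus U$, $M_1 = (\tilde V_s, V_1)$, $M_2 = (U, V_2)$. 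Bob recovers $(V_s,V_1,V_2)$ from $(M_1,M_2)$, so $\setL^Y_{M_1,M_2}$ coincides with the $(Y,V_s,V_1,V_2)$-bin, and the balanced-partition bound on the $\rho$-th moment of its cardinality yields \eqref{eq:BobMomDistStorDir}. For Eve, independence of $\tilde V_s$ (resp.\ $U$) from $(X,Y)$ forces her posterior on $X$ given $(Y,M_i)$ to depend only on $(Y,V_i)$; applying Arikan's lower bound within each hint and combining via the elementary inequality $\alpha \wedge \beta \geq \alpha\beta/(\alpha+\beta)$ yields \eqref{eq:EveMomDistStorDir}, with the $c_1+c_2$ arising from the genie's worst-case choice.

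For the converse, Bob's bound \eqref{eq:BobMomDistStorConv} is a direct consequence of the power-mean inequality: since $\tirho \in (0,1)$,
\ba
|\setL^y_{m_1,m_2}|^{1-\tirho} \geq \sum_x P_{X|Y,M_1,M_2}(x|y,m_1,m_2)^{\tirho}.
\ea
Raising to the power $\rho/(1-\tirho)=1+\rho$, weighting by $P_{Y,M_1,M_2}(y,m_1,m_2)$, using $(1+\rho)\tirho=1$, and summing gives
\ba
\mathscr A_{\text B}^{(\text l)}(P_{X,Y}) \geq \sum_{y,m_1,m_2}\Bigl(\sum_x P_{X,Y,M_1,M_2}(x,y,m_1,m_2)^{\tirho}\Bigr)^{1+\rho};
\ea
subadditivity of $t \mapsto t^{\tirho}$ then lets me marginalize over $(m_1,m_2)$ at a cost of $(|\setM_1||\setM_2|)^\rho$, producing \eqref{eq:BobMomDistStorConv}. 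For Eve's upper bound \eqref{eq:EveMomDistStorConv}, two strategies yield the two terms: (i) ignore both hints and guess $X$ from $Y$ via Arimoto's order, giving $\mathscr A_{\text E} \leq 2^{\rho\renent{\tirho}{X|Y}}$; and (ii) given only $M_i$, concatenate Bob's lists $\setL^Y_{M_1,M_2}$ across all $|\setM_j|$ values of the unseen hint, bounding Eve's guessing position by $|\setM_j|\cdot|\setL^Y_{M_1,M_2}|$ and hence her $\rho$-th moment by $|\setM_j|^\rho \mathscr A_{\text B}^{(\text l)}$; the minimum over $i$ then gives the first part of \eqref{eq:EveMomDistStorConv}.

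The main obstacle is the balanced partition used in the direct part: one must exhibit a single deterministic partition whose $y$-conditional bin statistics control the $\rho$-th moment of $|\setL^Y_{M_1,M_2}|$ uniformly across every $y$ for which the list is non-degenerate. This forces a union-bound slack of $\log|\setX|+2$, which is both the source of the non-triviality condition \eqref{eq:condCsC1C2List} and the reason that the effective denominator inside \eqref{eq:BobMomDistStorDir} is shifted by this amount relative to the guessing-version bound \eqref{eq:BobMomDistStorDirGuess}.
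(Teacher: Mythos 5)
Your converse is essentially correct and is in fact more self-contained than the paper's, which simply cites \cite[Theorem~6.1]{buntelapidoth14} for Bob's bound; your power-mean chain
\ba
\bigEx{}{|\setL^Y_{M_1,M_2}|^\rho}
\geq \sum_{y,m_1,m_2}\Bigl(\sum_x P_{X,Y,M_1,M_2}(x,y,m_1,m_2)^{\tirho}\Bigr)^{1+\rho}
\geq \bigl(\card{\setM_1}\card{\setM_2}\bigr)^{-\rho}\,2^{\rho\renent{\tirho}{X|Y}}
\ea
does the job, and your list-interleaving argument for \eqref{eq:EveMomDistStorConv} is a valid alternative to the paper's route through Corollary~\ref{co:impGuess}. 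Your achievability construction of $(V_s,V_1,V_2)$ and the XOR-like one-time pad matches the paper; the paper obtains \eqref{eq:BobMomDistStorDir} by citing the task-encoding bound \cite[Theorem~6.1]{buntelapidoth14} rather than by a fresh random-binning/derandomization argument, so your sketch for Bob is plausible but you would still have to verify that your union bound produces exactly the $\log\card\setX+2$ slack.

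The gap is in the achievability bound for Eve, \eqref{eq:EveMomDistStorDir}. Your plan is to observe that the posterior of $X$ given $(Y,M_i)$ depends only on $(Y,V_i)$, apply Arikan within each hint, and then combine with the pointwise inequality $\alpha\wedge\beta\geq\alpha\beta/(\alpha+\beta)$. That last step does not produce a lower bound on $\mathscr A_{\text{E}}=\min_{G_1,G_2}\bigEx{}{\guess 1{X|Y,M_1}^\rho\wedge\guess 2{X|Y,M_2}^\rho}$: the expectation of a minimum can be far below any symmetric function of the two marginal expectations (the two guessing positions can be made almost anti-correlated), so ``Arikan within each hint'' cannot be combined this way. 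Even heuristically, the harmonic mean of $2^{\rho(\renent{\tirho}{X|Y}-\log c_1)}$ and $2^{\rho(\renent{\tirho}{X|Y}-\log c_2)}$ yields a denominator $c_1^\rho+c_2^\rho$, not $(c_1+c_2)^\rho$. The ingredient you are missing is precisely Lemma~\ref{le:ImproveGuess}/Corollary~\ref{co:impGuess}: one first lifts the problem to guessing the pair $(X,U)$ (which is legitimate because $U$ is computable from $(X,M_1)$ and from $(X,M_2)$, and is needed because $\guess 1{\cdot|Y,M_1}\wedge\guess 2{\cdot|Y,M_2}$ is not itself a bijection in $X$), and then one realizes the pointwise minimum as a \emph{single} guessing function $\guess{}{\cdot,\cdot|Y,Z}$ for $(X,U)$ with auxiliary $Z=(K,M_K)$ taking at most $c_s(c_1+c_2)$ values, where $K$ is the index achieving the minimum. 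Corollary~\ref{co:impGuess} then charges a factor $\bigl(c_s(c_1+c_2)\bigr)^{-\rho}$, and $\renent{\tirho}{X,U|Y}=\renent{\tirho}{X|Y}+\log c_s$ cancels the $c_s$, giving the $\log(c_1+c_2)$ penalty in \eqref{eq:EveMomDistStorDir}. Without this ``minimum-as-side-information'' device the $(c_1+c_2)$-term does not emerge, and your argument as stated does not close.
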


We sketch a proof of Theorems~\ref{th:distStorGuess} and \ref{th:distStor} in Section~\ref{sec:proofs} ahead. Here, we discuss an important implication of the theorems:

\begin{note}\label{no:impPolyLog}
We can choose the conditional PMF in \eqref{eq:aliceEncPMF} so that, neglecting polylogarithmic factors of $\card \setX$, Bob's ambiguity satisfies any prespecified upper bound $\mathscr U_{\text{B}}$ no smaller than $2^{\rho ( \renent {\tirho} {X \left| Y \right.} - \log \! \card {\setM_1} \card {\setM_2} )} \vee 1$, while Eve's ambiguity is guaranteed to be $( \card {\setM_1}^\rho \!\! \wedge \! \card {\setM_2}^\rho ) \mathscr U_B \! \wedge \! 2^{\rho \renent {\tirho} {X \left| Y \right.}}$.
\end{note}

To show that Note~\ref{no:impPolyLog} holds, we next argue that the bounds in Theorems~\ref{th:distStorGuess} and \ref{th:distStor} are tight in the sense that with a judicious choice of $\left( c_s,c_1,c_2 \right)$ the achievability results (namely \eqref{eq:BobMomDistStorDirGuess}--\eqref{eq:EveMomDistStorDirGuess} in the ''guessing version'' and \eqref{eq:BobMomDistStorDir}--\eqref{eq:EveMomDistStorDir} in the ''list version'') match the corresponding converse results (namely \eqref{eq:BobMomDistStorConvGuess}--\eqref{eq:EveMomDistStorConvGuess} in the ''guessing version'' and \eqref{eq:BobMomDistStorConv}--\eqref{eq:EveMomDistStorConv} in the ''list version'') up to polylogarithmic factors of $\card \setX$. By possibly relabeling the hints, we can assume w.l.g.\ that $\card {\setM_2} \leq \card {\setM_1}$. If $\card {\setM_2}$ exceeds $2^{\renent {\tirho} { X \left| Y \right.}}$ we can choose $( c_s, c_1, c_2 ) = ( \card {\setM_2}, 1, 1)$. Neglecting polylogarithmic factors of $\card \setX$, this choice guarantees that Bob's ambiguity be close to one, while Eve's ambiguity is $2^{\rho \renent {\tirho} {X \left| Y \right.}}$. Suppose now that $\card {\setM_2}$ does not exceed $2^{\renent {\tirho} { X \left| Y \right.}}$. In this case we can choose $\left( c_s,c_1,c_2 \right)$ so that $c_1 \geq c_2$ and, neglecting logarithmic factors of $\card \setX$, so that $c_s c_2 = \card {\setM_2}$ while $c_s c_1 c_2$ assumes any given integer value between $\card {\setM_2}$ and $( \card {\setM_1} \card {\setM_2} ) \wedge 2^{\renent {\tirho} { X \left| Y \right.}}$. This indeed proves the claim: neglecting polylogarithmic factors of $\card \setX$, we can guarantee that Bob's ambiguity satisfy any given upper bound no smaller than the RHS of \eqref{eq:BobMomDistStorConvGuess} or \eqref{eq:BobMomDistStorConv}, while Eve's ambiguity satisfies \eqref{eq:EveMomDistStorConvGuess} or \eqref{eq:EveMomDistStorConv} with equality.

\subsection{Asymptotic Results}

Consider now the asymptotic regime where $\left( X,Y \right)$ is an $n$-tuple. In this case the results are the same for both versions of the problem, and we thus refer to both $\mathscr A^{(\text g)}_{\text{B}}$ and $\mathscr A^{(\text l)}_{\text{B}}$ by $\mathscr A_{\text{B}}$. With a judicious choice of $\left( c_s,c_1,c_2 \right)$ one can show that Theorems~\ref{th:distStorGuess} and \ref{th:distStor} imply the following asymptotic result:

\begin{corollary}\label{co:asympDistStor}
Let $\set{\left(X_i,Y_i\right)}_{i \in \naturals}$ be a discrete-time stochastic process with finite alphabet $\setX \times \setY$, and suppose its conditional R\'enyi entropy-rate $\renent {\tirho}{ \rndvecX | \rndvecY }$ is well-defined. Given any positive rate-pair $\left( R_1, R_2 \right)$, the privacy-exponent is
\ba 
\! \overbar{E_{\text{E}}} \! = \! \begin{cases} \rho \! \left( R_1 \! \wedge \! R_2 \! \wedge \! \renent {\tirho}{ \rndvecX | \rndvecY } \right) \!, &\!\!\!\! R_1 \! + \! R_2 \! > \! \renent {\tirho}{ \rndvecX | \rndvecY } \\ - \infty, &\!\!\!\! R_1 \! + \! R_2 \! < \! \renent {\tirho}{ \rndvecX | \rndvecY }. \end{cases} \label{eq:secStrongConst}
\ea
\end{corollary}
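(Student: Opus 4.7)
The plan is to apply Theorems~\ref{th:distStorGuess} and \ref{th:distStor} with $\card{\setM_1} = 2^{nR_1}$, $\card{\setM_2} = 2^{nR_2}$, and $(X^n, Y^n)$ in place of $(X,Y)$. The polylogarithmic factors of $\card{\setX^n}$ appearing in both theorems (the factor $(1+\ln\card{\setX^n})^{-\rho}$ and the additive $\log\card{\setX^n} + 2$ inside a logarithm) grow only polynomially in $n$; after taking logarithms and dividing by $n$ they vanish. Consequently both versions share the same asymptote and a single argument covers both. Write $H \define \renent{\tirho}{\rndvecX|\rndvecY}$ and $H^{(n)} \define \renent{\tirho}{X^n|Y^n}$, so $H^{(n)}/n \to H$ by hypothesis.

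For the \emph{converse} I would suppose that an ambiguity-exponent $E_{\text E}$ is achievable, so that $\mathscr A_{\text B}(P_{X^n,Y^n}) \to 1$. Plugging $\card{\setM_1}\card{\setM_2} = 2^{n(R_1+R_2)}$ into \eqref{eq:BobMomDistStorConvGuess} (or \eqref{eq:BobMomDistStorConv}) and dividing by $n$ gives $n^{-1}\log \mathscr A_{\text B} \geq \rho(H^{(n)}/n - (R_1+R_2)) - o(1)$. If $R_1+R_2 < H$, the right-hand side is eventually bounded away from zero while the left-hand side tends to zero, which is a contradiction; this rules out every finite $E_{\text E}$ and yields the second line of \eqref{eq:secStrongConst}. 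When $R_1+R_2 > H$, I would insert $\mathscr A_{\text B} = 1 + o(1)$ into \eqref{eq:EveMomDistStorConvGuess} (or \eqref{eq:EveMomDistStorConv}) to conclude $n^{-1}\log \mathscr A_{\text E} \leq \rho(R_1 \wedge R_2 \wedge H^{(n)}/n) + o(1)$, giving $\overbar{E_{\text E}} \leq \rho(R_1 \wedge R_2 \wedge H)$.

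For the \emph{direct part} (when $R_1+R_2 > H$, and assuming without loss of generality $R_1 \geq R_2$), I would, for every small $\epsilon > 0$, exhibit a sequence of integer triples $(c_s, c_1, c_2)$ obeying \eqref{bl:condCsC1C2Guess} (or \eqref{bl:condCsC1C2}) with $c_s c_1 c_2 \geq 2^{n(H+\epsilon)}$ and $c_1 + c_2 \leq 2^{n(H - (R_1 \wedge R_2 \wedge H) + \epsilon)+1}$. The first inequality drives the exponent inside \eqref{eq:BobMomDistStorDirGuess}/\eqref{eq:BobMomDistStorDir} to $-\infty$, so $\mathscr A_{\text B} \to 1$; the second, substituted into \eqref{eq:EveMomDistStorDirGuess}/\eqref{eq:EveMomDistStorDir}, gives $\liminf_{n\to\infty} n^{-1}\log \mathscr A_{\text E} \geq \rho(R_1 \wedge R_2 \wedge H) - O(\rho\epsilon)$. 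A split into two subcases works: if $R_2 > H$, take $c_s \approx 2^{n(H+\epsilon)}$ with $c_1 = c_2 = 1$ (the constraint $c_s \leq \card{\setM_2}$ survives because $H + \epsilon < R_2$); if $R_2 \leq H$, take $c_1 \approx c_2 \approx 2^{n(H - R_2 + \epsilon)}/2$ and saturate $c_s \approx \card{\setM_2}/c_2$, which fits under $\card{\setM_1}$ since $R_1 \geq R_2$. Letting $\epsilon \downarrow 0$ matches the converse.

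The main obstacle is the combinatorial book-keeping in the direct part: the integer triple $(c_s, c_1, c_2)$ must simultaneously respect all three constraints in \eqref{bl:condCsC1C2Guess}/\eqref{bl:condCsC1C2}, exceed $2^{nH}$ by at least a factor of $2^{n\epsilon}$ so that Bob's ambiguity tends to one, and keep $c_1 + c_2$ small enough to deliver the target Eve exponent in every regime of $(R_1, R_2)$, with the boundary case $R_2 \approx H$ requiring slightly careful adjustment. Once this casework is settled, the remaining asymptotics are routine.
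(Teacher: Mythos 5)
Your converse and the overall direct-part strategy are sound and match the paper's own sketch (the paper only proves the corollary by pointing to a ``judicious choice'' of $(c_s,c_1,c_2)$ in the discussion preceding it). The polylogarithmic book-keeping, the $R_1+R_2<\renent{\tirho}{\rndvecX|\rndvecY}$ contradiction argument, the Eve upper bound from $\mathscr A_{\text B}\to1$, and Subcase~1 ($R_2>\renent{\tirho}{\rndvecX|\rndvecY}$: take $c_1=c_2=1$, $c_s\approx 2^{n(\renent{\tirho}{\rndvecX|\rndvecY}+\epsilon)}$) are all correct.

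The gap is in your Subcase~2 triple. Writing $H\define\renent{\tirho}{\rndvecX|\rndvecY}$ and taking $c_1\approx c_2\approx 2^{n(H-R_2+\epsilon)}/2$ with $c_s\approx \card{\setM_2}/c_2$ requires $c_2\leq\card{\setM_2}=2^{nR_2}$ (equivalently $c_s\geq1$), i.e.\ $H-R_2+\epsilon\leq R_2$. This fails whenever $R_2<H/2$, a regime that is perfectly admissible here (for instance $R_1=H$, $R_2=H/4$, which gives $R_1+R_2>H$, $R_1\geq R_2$, $R_2\leq H$). In that regime your $c_s$ drops below $1$, and forcing $c_s=1$ instead makes $c_2>\card{\setM_2}$ violate \eqref{bl:condCsC1C2Guess}/\eqref{bl:condCsC1C2}. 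Your proposal also misidentifies the delicate regime: it is not the boundary $R_2\approx H$ (where your symmetric choice works) but small $R_2$ that breaks.

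The fix is the asymmetric choice implicit in the paper's discussion: with $c_1\geq c_2$ and $c_sc_2\approx\card{\setM_2}$, set $c_1\approx 2^{n(H+\epsilon-R_2)}$, $c_2\approx 1\vee 2^{n(H+\epsilon-R_1)}$, and $c_s\approx\card{\setM_2}/c_2$. Then $c_sc_1c_2\approx 2^{n(H+\epsilon)}$ kills Bob's exponent; $c_1+c_2\leq 2c_1$ because $R_1\geq R_2$, so Eve's exponent tends to $\rho(R_2-\epsilon)\to\rho R_2$; and all three constraints in \eqref{bl:condCsC1C2Guess}/\eqref{bl:condCsC1C2} hold because $c_s\leq\card{\setM_2}$, $c_sc_2\approx\card{\setM_2}$, and $c_sc_1\approx\card{\setM_2}\cdot 2^{n(H+\epsilon-R_2)}/c_2 \leq 2^{nR_1}$ precisely because of the choice of $c_2$. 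Balancing $c_1=c_2$ throws away the $R_1\geq R_2$ asymmetry that makes room for a large $c_1$ while $c_2$ stays small.
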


In the full version of this paper \cite{bracherlapidoth14}, we generalize Corollary~\ref{co:impGuess} to a scenario where Bob's ambiguity may grow exponentially with a given normalized (by $n$) exponent $E_{\text{B}}$.

\section{Lists and Guesses}

The results for the ''guessing version'' and the ''list version'' are remarkably similar. To understand why, we relate task-encoders to guessing functions. We show that a good guessing function induces a good task-encoder and vice versa:

\begin{theorem}\label{th:relGuessEnc}
Let $\left( X,Y \right)$ be drawn from the finite set $\setX \times \setY$ according to the PMF $P_{X,Y}$. Using the side-information $Y$, a stochastic task-encoder describes the task $X$ by a chance variable $M$, which it draws from a finite set $\setM$ according to some conditional PMF
\be 
\distof{M = m | X = x, Y = y}, \, m \in \setM, \, x \in \setX, \, y \in \setY. \label{eq:condPMFRelGuessEnc}
\ee
For any PMF \eqref{eq:condPMFRelGuessEnc} define for all $m \in \setM$ and $y \in \setY$ the lists
\be 
\setL^y_m = \set{ x \in \setX \colon \distof {X = x | Y = y, M = m} > 0 }. \label{eq:relGuessEncLists}
\ee
\begin{enumerate}
\item For every conditional PMF \eqref{eq:condPMFRelGuessEnc} the lists $\set{\setL^y_m}$ induce a guessing function $\guess {}{\cdot | Y }$ for $X$ such that
\be 
\Ex {}{\guess {} {X | Y }^\rho} \leq \card \setM^\rho \bigEx {}{ | \setL^Y_M |^\rho}. \label{eq:listToGuess}
\ee
\item Every guessing function $\guess {}{\cdot | Y }$ for $X$ and every positive integer $v \leq \card \setX$ satisfying
\be 
\card \setM \geq v \left( \left\lfloor \log \! \left\lceil \card \setX / v \right\rceil \right\rfloor + 1 \right) \label{eq:relCardMandV}
\ee
induce a $0$-$1$ valued conditional PMF \eqref{eq:condPMFRelGuessEnc}---i.e., a deterministic task-encoder---whose lists $\set{\setL^y_m}$ satisfy
\be 
\bigEx {}{| \setL^Y_M |^\rho} \leq \Ex {}{\left\lceil \guess {} {X | Y } / v \right\rceil^\rho}. \label{eq:guessToList}
\ee
\end{enumerate}
\end{theorem}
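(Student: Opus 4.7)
The plan is to prove the two parts of the theorem by explicit constructions that convert between task-encoders and guessing functions.

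\textbf{Part~1} (from lists to a guessing function). Fix $y \in \setY$. For each $m \in \setM$, pick any guessing function $\guess{}{\cdot|y,m}$ that assigns the ranks $1, 2, \ldots, |\setL^y_m|$ to the elements of $\setL^y_m$. Construct $\guess{}{\cdot|y}$ by a round-robin interleaving: at round $k = 1, 2, \ldots$, and for each $m \in \setM$ in a fixed order, guess the rank-$k$ element of $\guess{}{\cdot|y,m}$ (skipping any element already guessed or any list that has been exhausted), placing the elements of $\setX$ that belong to no list at the very end. Then every $x \in \setL^y_m$ has overall rank $\guess{}{x|y} \leq k \cdot \card{\setM} \leq \card{\setM} \cdot |\setL^y_m|$, where $k = \guess{}{x|y,m} \leq |\setL^y_m|$. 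Since the event $(X,Y,M)=(x,y,m)$ has positive probability only if $x \in \setL^y_m$, this yields $\guess{}{X|Y} \leq \card{\setM} \cdot |\setL^Y_M|$ almost surely, and raising to the $\rho$-th power and taking expectations proves \eqref{eq:listToGuess}.

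\textbf{Part~2} (from a guessing function to lists). Fix $y \in \setY$ and relabel the elements of $\setX$ as $x_1, \ldots, x_{\card{\setX}}$ in the order of $\guess{}{\cdot|y}$, so that $\guess{}{x_i|y} = i$. Let $J = \lceil \card{\setX}/v \rceil$ and $L = \lfloor \log J \rfloor$, and partition $\setX$ into \emph{levels} $0, 1, \ldots, L$, where level $\ell$ consists of those $x_i$ with $i \in \bigl( (2^\ell - 1) v, \, (2^{\ell+1} - 1) v \bigr]$. Level $\ell$ contains at most $2^\ell v$ elements, which we partition into at most $v$ \emph{subgroups} each of cardinality at most $2^\ell$; across all $L+1$ levels these subgroups number at most $v(L+1) \leq \card{\setM}$ by hypothesis, so each can be given a distinct label from $\setM$. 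Define the deterministic encoder $\phi(x_i, y)$ to be the label of the subgroup containing $x_i$; the resulting list $\setL^y_{\phi(x_i,y)}$ is exactly that subgroup, hence has size at most $2^\ell$, while $\lceil i/v \rceil \geq 2^\ell$ because $i > (2^\ell - 1) v$. Therefore $|\setL^Y_M| \leq \lceil \guess{}{X|Y}/v \rceil$ almost surely, and \eqref{eq:guessToList} follows.

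The main obstacle is the hierarchical construction in Part~2: a naive alternative that for each rank block $\{i \colon \lceil i/v \rceil = j\}$ creates $\lceil v/j \rceil$ lists of size~$j$ would consume roughly $v \cdot H_J$ labels, which overshoots the $v(\lfloor \log J \rfloor + 1)$ budget in \eqref{eq:relCardMandV}. The dyadic grouping succeeds because the admissible list size doubles from one level to the next, so a fixed budget of $v$ labels per level suffices. One should also verify that level $L$ captures every element, i.e., $(2^{L+1} - 1) v \geq \card{\setX}$; this reduces to the integer inequality $2^{L+1} \geq J + 1$, which holds for $L = \lfloor \log J \rfloor$.
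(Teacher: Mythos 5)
Your proof is correct, and the overall strategy---explicit constructions converting a task-encoder into a guessing function and vice versa---is the paper's own. Part~2 is in essence the paper's construction described directly in terms of a partition of $\setX$ rather than via two auxiliary variables: your level index $\ell$ coincides with the paper's $S = \lfloor \log \lceil \guess{}{X|Y}/v\rceil\rfloor$, your $v$ subgroups within a level play the role of the paper's $Z = (\guess{}{X|Y}-1)\bmod v$, the budget check $v(L+1)\leq\card\setM$ is the same, and your direct bound of $2^\ell$ on each subgroup replaces the paper's appeal to Fact~1 about $k\mapsto\lfloor\log k\rfloor$; your coverage check $(2^{L+1}-1)v\geq\card\setX$ is a detail the paper leaves implicit. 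Part~1 differs mechanically: the paper sorts the lists $\{\setL^y_m\}_m$ by cardinality and concatenates them (skipping repeats), then argues that every $\tilde x$ guessed no later than $x$ lies in one of the at most $\card\setM$ lists of size $\leq\min_{m\colon x\in\setL^y_m}\card{\setL^y_m}$, and finally relates that minimum to $\card{\setL^Y_M}$ in expectation; your round-robin interleaving gives the cleaner pointwise inequality $\guess{}{X|Y}\leq\card\setM\cdot\card{\setL^Y_M}$ almost surely, sidestepping both the intermediate $\min_m$ argument and the expectation-passing step. Both constructions yield \eqref{eq:listToGuess} with the same constant, but your Part~1 argument is arguably the more transparent of the two.
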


To prove Theorem~\ref{th:relGuessEnc}, we need the following fact:

\begin{fact}\label{fa:1}
Fix a positive integer $u$, and let $h \! \left( \cdot \right)$ map every $k \in \set{1, \ldots, u}$ to $\left\lfloor \log \! k \right\rfloor$. Then,
\be\label{eq:fa1}
\bigr|\bigr\{\tilde k \! \in \! \set{1, \ldots, u} \colon h ( \tilde k ) = h \! \left( k \right) \! \bigl\}\bigl| \leq k, \, k \in \set{1, \ldots, u}.
\ee
\end{fact}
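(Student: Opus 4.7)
The plan is to interpret $h(k)=\lfloor \log k\rfloor$ in the base $2$ sense (consistent with the rest of the paper's use of $\log$), so that $h(k)$ is one less than the number of bits in the binary representation of $k$. Then $\{\tilde k : h(\tilde k)=h(k)\}$ is simply the set of positive integers sharing the same most significant bit position as $k$.

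Concretely, fix $k\in\{1,\ldots,u\}$ and let $j=h(k)=\lfloor\log k\rfloor$. The condition $\lfloor\log\tilde k\rfloor=j$ is equivalent to $2^{j}\le\tilde k<2^{j+1}$. Intersecting with $\{1,\ldots,u\}$ can only reduce the count, so
\[
\bigl|\bigl\{\tilde k\in\{1,\ldots,u\}:h(\tilde k)=j\bigr\}\bigr|\le 2^{j+1}-2^{j}=2^{j}.
\]
On the other hand, $\lfloor\log k\rfloor=j$ implies $k\ge 2^{j}$. Chaining the two inequalities yields the claim \eqref{eq:fa1}.

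The proof really is one line once the interpretation of $h$ is pinned down; there is no genuine obstacle. The only thing to be careful about is the convention for the logarithm: if $\log$ were the natural logarithm, then $h$ would no longer coincide with the bit-length function and the identity $|\{\tilde k:\lfloor\log\tilde k\rfloor=j\}|\le 2^{j}$ would have to be replaced by $\le e^{j+1}-e^{j}$, which is not necessarily an integer and would not immediately give the clean bound by $k$. The way the fact is used in Theorem~\ref{th:relGuessEnc} (in connection with the bit-length condition \eqref{eq:relCardMandV}) confirms that $\log=\log_{2}$ is the intended reading, which makes the proof go through as above.
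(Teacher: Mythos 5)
Your proof is correct and follows essentially the same route as the paper: both observe that $h(\tilde k)=h(k)=j$ forces $2^{j}\le\tilde k<2^{j+1}$, so at most $2^{j}$ integers share the value, while $k\ge 2^{j}$. The remark pinning $\log$ to base~$2$ is a sensible sanity check but is implicit in the paper.
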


\begin{proof}
If $k, \tilde k \in \set{1, \ldots, u}$ are such that $h (\tilde k) = h (k)$, then $2^{\left\lfloor \log \! k \right\rfloor} \leq k < 2^{\left\lfloor \log \! k \right\rfloor + 1}$. Hence, \eqref{eq:fa1} holds.
\end{proof}

%

\begin{proof}[Proof of Theorem~\ref{th:relGuessEnc}] As to the first part, suppose we are given a conditional PMF \eqref{eq:condPMFRelGuessEnc} with corresponding lists $\set{ \setL^y_m }$ as in \eqref{eq:relGuessEncLists}. For each $y \in \setY$, order the lists $\set{\setL^y_m}_{m \in \setM}$ in increasing order of their cardinalities, and order the elements in each list in some arbitrary way. Now consider the guessing order where we first guess the elements of the first (and smallest) list in their respective order followed by those elements in the second list that have not yet been guessed (i.e., that are not contained in the first list) and we continue until concluding by guessing those elements of the last (and longest) list that have not been previously guessed. Let $\guess {}{\cdot | Y}$ be the corresponding guessing function, and observe that
\bas 
\Ex {}{\guess {}{X|Y}^\rho} &= \sum_{x,y} P_{X,Y} \! \left( x,y \right) \card {\set{\tilde x \colon \guess {}{\tilde x|y} \leq \guess {}{x|y}}}^\rho \\
&\stackrel{(a)}\leq \sum_{x,y} P_{X,Y} \! \left( x,y \right) \card \setM^\rho \min_{m \colon x \in \setL^y_m} \card {\setL^y_m}^\rho \\
&\leq \card \setM^\rho \bigEx {}{| \setL^Y_M |^\rho},
\eas
where $(a)$ holds because for all $x, \tilde x \in \setX$ and $y \in \setY$ a necessary condition for $\guess {}{\tilde x|y} \leq \guess {}{x|y}$ is that $\tilde x \in \setL^y_{\tilde m}$ for some $\tilde m \in \setM$ satisfying $\card{\setL^y_{\tilde m}} \leq \min_{m \colon x \in \setL^y_m} \card {\setL^y_m}$, and the number of lists whose size does not exceed $\min_{m \colon x \in \setL^y_m} \card {\setL^y_m}$ is at most $\card \setM$.

As to the second part, suppose we are given a guessing function $\guess {}{\cdot | Y }$ for $X$ and a positive integer $v \leq \card \setX$ that satisfies \eqref{eq:relCardMandV}. Let $\setZ = \left\{ 0, \ldots, v - 1 \right\}$ and $\setS = \left\{ 0, \ldots, \left\lfloor \log \! \left\lceil \card \setX / v \right\rceil \right\rfloor \right\}$. From \eqref{eq:relCardMandV} it follows that $\card \setM \geq \card \setZ \card \setS$. It thus suffices to prove the existence of a task-encoder that uses only $\card \setZ \card \setS$ possible descriptions, and we thus assume w.l.g.\ that $\setM = \setZ \times \setS$. That is, using the side-information $y$ the task-encoder (deterministically) describes $x$ by $m = (z,s)$. The encoding involves two steps:

\textbf{Step~1:} In Step~1 the encoder first computes $Z \in \setZ$ as the remainder of the Euclidean division of $\guess {}{X|Y} - 1$ by $\card \setZ$. It then constructs from $\guess {}{\cdot | Y}$ a guessing function $\guess {}{\cdot | Y,Z}$ for $X$ as follows. Given $Y = y$ and $Z = z$, the task $X$ must be in the set $\setX_{y,z} \triangleq \set{x \in \setX \colon ( \guess {} { x | y } - 1 ) \equiv z \, \bmod \card \setZ}$. The encoder constructs the guessing function $\guess {}{ \cdot | y,z }$ so that---in the corresponding guessing order---we first guess the elements of $\setX_{y,z}$ in increasing order of $\guess {}{ x | y }$. For $l \in \set{1, \ldots, \card {\setX_{y,z}}}$ our $l$-th guess $x_l$ is thus the element of $\setX_{y,z}$ for which $\guess {}{x_l|y} = z + 1 + ( l - 1 ) \card \setZ$. 
Once we have guessed all the elements of $\setX_{y,z}$ we guess the remaining elements of $\setX$ in some arbitrary order. This order is immaterial because $X$ is guaranteed to be in the set $\setX_{y,z}$. 
Since $z + 1 \in \set{ 1, \ldots, \card \setZ}$ we find that $\guess {}{ x | y,z } = \left\lfloor \guess {}{ x | y } / \card \setZ \right\rfloor$ whenever $x \in \setX_{y,z}$. But $X$ is guaranteed to be in the set $\setX_{y,z}$. Hence, the guessing function $\guess {}{\cdot | Y,Z}$ for $X$ satisfies
\be\label{eq:guessToListZ}
\guess {} {X | Y,Z } = \left\lceil \guess {} { X | Y } / \card \setZ \right\rceil.
\ee

\textbf{Step~2:} In Step~2 the encoder first computes $S = \left\lfloor \log \! \guess {}{X | Y, Z} \right\rfloor \in \setZ$, and then describes the task $X$ by $M \triangleq \left( Z,S \right)$. Given $Y = y$, $Z = z$, and $S = s$, the task $X$ must be in the set $\setX^{y,z}_{s} \triangleq \set{x \in \setX \colon \left\lfloor \log \! \guess {}{x | y, z} \right\rfloor = s}$. Fact~\ref{fa:1} and the fact that the guessing function $\guess {}{\cdot | y,z }$ is a bijection imply that $\card {\setX^{y,z}_s} \leq \guess {}{ x | y, z}$ for all $x \in \setX^{y,z}_s$. Since $X$ is guaranteed to be in the set $\setX^{y,z}_s$ we have $| \setX^{Y,Z}_S | \leq \guess {}{ X | Y, Z}$. From $M = (Z,S)$ and \eqref{eq:relGuessEncLists} we obtain that the list $\setL^Y_M$ is contained in the set $\setX^{Y,Z}_S$ and thus satisfies
\be\label{eq:guessToListS}
| \setL^Y_M | \leq \guess {} { X | Y,Z }.
\ee

Recalling that $\card \setZ = v$ we conclude from \eqref{eq:guessToListZ} and \eqref{eq:guessToListS}
\be 
\bigEx {}{| \setL^Y_M |^\rho } \leq \Ex {}{\guess {} { X | Y,Z }} = \Ex {}{\left\lceil \guess {} {X | Y } / v \right\rceil^\rho}.
\ee
Since $Z$ and $S$ are deterministic given $\left( X,Y \right)$ the conditional PMF \eqref{eq:condPMFRelGuessEnc} associated with $M = (Z,S)$ is $0$-$1$ valued.
\end{proof}

The choice of $v$ as $\left\lfloor \card \setM / \left( \left\lfloor \log \! \card \setX \right\rfloor + 1 \right) \right\rfloor$ and \cite[Equation~(26)]{buntelapidoth14}, i.e., $\left\lceil \xi \right\rceil^\rho < 1 + 2^{\rho} \xi^\rho, \xi \geq 0$, imply our next result:

\begin{corollary}\label{co:guessToBestList}
Every guessing function $\guess {}{\cdot | Y}$ for $X$ induces a deterministic task-encoder corresponding to a $0$-$1$ valued conditional PMF \eqref{eq:condPMFRelGuessEnc} that satisfies
\be 
\bigEx {}{| \setL^Y_M |^\rho} \! \leq \! 1 \! + \! 2^{\rho} \Ex {}{\guess {}{X | Y}^\rho} \left( \! \frac{\card \setM}{\log \! \card \setX + 1} \! - \! 1 \! \right)^{-\rho}. \label{eq:guessToBestList}
\ee
\end{corollary}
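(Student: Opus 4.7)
The plan is to apply the second part of Theorem~\ref{th:relGuessEnc} with the specific choice $v = \lfloor |\setM| / (\lfloor \log |\setX| \rfloor + 1) \rfloor$ and then invoke the elementary inequality $\lceil \xi \rceil^{\rho} < 1 + 2^{\rho} \xi^{\rho}$ for $\xi \geq 0$ that is recalled from \cite[Eq.~(26)]{buntelapidoth14}. I may assume the regime in which the bound is nontrivial, namely $|\setM| > \log |\setX| + 1$ (otherwise the RHS of \eqref{eq:guessToBestList} is infinite or undefined and there is nothing to prove); in particular, this ensures $v \geq 1$.

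The first step is to verify the hypothesis \eqref{eq:relCardMandV} for this choice of $v$. Since $v \geq 1$, we have $\lceil |\setX|/v \rceil \leq |\setX|$, so $\lfloor \log \lceil |\setX|/v \rceil \rfloor + 1 \leq \lfloor \log |\setX| \rfloor + 1$. By the definition of $v$, this yields
\[
v \bigl( \lfloor \log \lceil |\setX|/v \rceil \rfloor + 1 \bigr) \leq v \bigl( \lfloor \log |\setX| \rfloor + 1 \bigr) \leq |\setM|,
\]
so the hypothesis of the second part of Theorem~\ref{th:relGuessEnc} holds. (If this choice of $v$ exceeds $|\setX|$, then the list is trivially a singleton and \eqref{eq:guessToBestList} holds directly.) Applying the theorem produces a deterministic task-encoder, corresponding to a $0$-$1$ valued conditional PMF of the form \eqref{eq:condPMFRelGuessEnc}, whose induced lists satisfy $\bigEx{}{|\setL^Y_M|^{\rho}} \leq \Ex{}{\lceil \guess{}{X|Y}/v \rceil^{\rho}}$.

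The second step is to control the RHS. Applying $\lceil \xi \rceil^{\rho} < 1 + 2^{\rho} \xi^{\rho}$ with $\xi = \guess{}{X|Y}/v$ and taking expectations gives
\[
\Ex{}{\lceil \guess{}{X|Y}/v \rceil^{\rho}} < 1 + \frac{2^{\rho}}{v^{\rho}} \Ex{}{\guess{}{X|Y}^{\rho}}.
\]
Finally, from $v \geq |\setM|/(\lfloor \log |\setX| \rfloor + 1) - 1 \geq |\setM|/(\log |\setX| + 1) - 1$ we obtain $1/v \leq ( |\setM|/(\log |\setX| + 1) - 1 )^{-1}$, which upon substitution yields exactly \eqref{eq:guessToBestList}. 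There is no substantive obstacle here beyond the bookkeeping: the nontrivial content lies in Theorem~\ref{th:relGuessEnc}, and the corollary just records the convenient choice of $v$ that balances the two factors in the bound.
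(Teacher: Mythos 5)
Your proof is correct and follows exactly the route the paper takes: the paper's proof of this corollary is a single sentence stating that the choice $v = \lfloor \card\setM / (\lfloor \log\card\setX \rfloor + 1)\rfloor$ together with the elementary inequality $\lceil\xi\rceil^\rho < 1 + 2^\rho\xi^\rho$ applied to the bound in Theorem~\ref{th:relGuessEnc} yields the result. Your write-up simply fills in the bookkeeping the paper leaves implicit (verifying \eqref{eq:relCardMandV} for this $v$, lower-bounding $v$ by $\card\setM/(\log\card\setX+1) - 1$, and handling the degenerate regimes), all of which is sound.
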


Combined with Arikan's bounds \cite[Theorem~1 and Proposition~4]{arikan96} on $\Ex {}{\guess {}{X | Y}^\rho}$, Equations~\eqref{eq:listToGuess} and \eqref{eq:guessToBestList} provide an upper and a lower bound on the smallest $\Ex {}{| \setL^Y_M |^\rho}$ that is achievable for a given $\card \setM$. These bounds are weaker than \cite[Theorem~1.1 and Theorem~6.1]{buntelapidoth14} in the finite blocklength regime but tight enough to prove the asymptotic results \cite[Theorem~1.2 and Theorem~6.2]{buntelapidoth14}.

Another interesting corollary to Theorem~\ref{th:relGuessEnc} results from the choice of $v$ as $1$ (see Section~\ref{sec:guessListClose} for an implication):

\begin{corollary}\label{co:guessToList}
For $\card \setM = \left\lfloor \log \! \card \setX \right\rfloor + 1$ every guessing function $\guess {}{ \cdot | Y}$ induces a deterministic task-encoder for which
\be 
\bigEx {}{| \setL^Y_M |^\rho} \leq \bigEx {}{\guess {} {X | Y }^\rho}. \label{eq:guessToListV1}
\ee
\end{corollary}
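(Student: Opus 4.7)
The plan is to obtain this corollary as a direct specialization of the second part of Theorem~\ref{th:relGuessEnc}, setting $v = 1$. First I would verify that the hypothesis of the theorem is met: the condition \eqref{eq:relCardMandV} with $v = 1$ reads
\ba
\card \setM \geq 1 \cdot \bigl( \lfloor \log \lceil \card \setX / 1 \rceil \rfloor + 1 \bigr) = \lfloor \log \card \setX \rfloor + 1, \nonumber
\ea
which matches the cardinality hypothesis of the corollary exactly (and $v = 1 \leq \card \setX$ is trivially satisfied because $\setX$ is nonempty).

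Next I would invoke the conclusion \eqref{eq:guessToList} of the theorem with $v = 1$. Since $\guess{}{x|y}$ is a positive integer for every $(x,y) \in \setX \times \setY$, we have $\lceil \guess{}{x|y} / 1 \rceil = \guess{}{x|y}$ pointwise, so the right-hand side of \eqref{eq:guessToList} collapses to $\bigEx {}{\guess{}{X | Y}^\rho}$, which is exactly the upper bound required in \eqref{eq:guessToListV1}. The resulting task-encoder is the $0$-$1$ valued one produced by the two-step construction in the proof of Theorem~\ref{th:relGuessEnc}, so it is indeed deterministic.

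There is no substantive obstacle to overcome; the only thing to check carefully is the arithmetic identification $\lceil \guess{}{x|y} \rceil = \guess{}{x|y}$ and the book-keeping that the ceiling $\lceil \card \setX / 1 \rceil$ in \eqref{eq:relCardMandV} equals $\card \setX$, so that the hypothesis of the theorem reduces precisely to the hypothesis of the corollary. Once these trivial simplifications are made, \eqref{eq:guessToListV1} is an immediate consequence of \eqref{eq:guessToList}.
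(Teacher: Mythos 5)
Your proposal is correct and is exactly the paper's own route: the paper derives this corollary by setting $v=1$ in Theorem~\ref{th:relGuessEnc}, observing that \eqref{eq:relCardMandV} then reduces to $\card \setM \geq \lfloor \log \card \setX \rfloor + 1$ and that the ceiling in \eqref{eq:guessToList} becomes trivial. Your bookkeeping (including noting that the resulting encoder is the deterministic one from the theorem's construction) matches the intended argument.
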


\section{On the Proof of Theorems~\ref{th:distStorGuess} and \ref{th:distStor}}\label{sec:proofs}

To prove Theorems~\ref{th:distStorGuess} and \ref{th:distStor}, we must quantify how additional side-information $Z$ helps guessing. We show that if $Z$ takes value in a finite set $\setZ$, then it can reduce the $\rho$-th moment of the number of guesses by at most a factor of $\card \setZ^{-\rho}$.

\begin{lemma}\label{le:ImproveGuess}
Let $\left( X,Y,Z \right)$ be drawn from the finite set $\setX \times \setY \times \setZ$ according to the PMF $P_{X,Y,Z}$. Then,
\be
\Ex {}{\guessast {}{X|Y,Z}^\rho} \geq \bigEx {}{\left\lceil \guessast {}{X | Y} / \card \setZ \right\rceil^\rho}, \label{eq:impGuess}
\ee
where $\guessast {}{\cdot|Y,Z}$ minimizes $\bigEx {}{\guess {}{X|Y,Z}^\rho}$ and $\guessast {}{\cdot|Y}$ minimizes $\bigEx {}{\guess {}{X | Y}^\rho}$. Equality holds whenever $Z = f \! \left( X,Y \right)$ for some $f \colon \setX \times \setY \rightarrow \setZ$ for which $f \! \left( x,y \right) = f \! \left( \tilde x,y \right)$ implies either $\left\lceil G^\ast \! \left( x | y \right) / \card \setZ \right\rceil \neq \left\lceil G^\ast \! \left( \tilde x | y \right) / \card \setZ \right\rceil$ or $x = \tilde x$. Such a function always exists because for all $l \in \naturals$ at most $\card \setZ$ different $x \in \setX$ satisfy $\left\lceil G^\ast \! \left( x | y \right) / \card \setZ \right\rceil = l$.
\end{lemma}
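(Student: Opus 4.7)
The plan is to reduce the claim to a pointwise comparison of guessing indices. I would produce a (possibly suboptimal) guessing function $\tilde G(\cdot\,|\,Y)$ for $X$ from the optimal $\guessast{}{\cdot\,|\,Y,Z}$ such that $\tilde G(x|y) \leq \card{\setZ} \cdot \guessast{}{x|y,z}$ for every $(x,y,z)$. Then $\lceil \tilde G(X|Y)/\card{\setZ}\rceil \leq \guessast{}{X|Y,Z}$ pointwise, so taking $\rho$-th moments and using the optimality of $\guessast{}{\cdot\,|\,Y}$ (any guessing order that minimizes $\bigEx{}{G(X|Y)^\rho}$ also minimizes $\bigEx{}{f(G(X|Y))}$ for every nondecreasing $f$, by a standard rearrangement argument) yields
\[
\bigEx{}{\guessast{}{X|Y,Z}^\rho} \geq \bigEx{}{\lceil \tilde G(X|Y)/\card{\setZ}\rceil^\rho} \geq \bigEx{}{\lceil \guessast{}{X|Y}/\card{\setZ}\rceil^\rho},
\]
which is \eqref{eq:impGuess}.

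For the construction of $\tilde G(\cdot\,|\,y)$, I interleave the $\card{\setZ}$ guessing orders $\{\guessast{}{\cdot\,|\,y,z}\}_{z \in \setZ}$. In round $k = 1, 2, \ldots$, and in some fixed order over $z \in \setZ$, query the unique $x$ with $\guessast{}{x|y,z} = k$ unless it has already been queried. After round $k$ is completed, every $x$ for which some $z$ satisfies $\guessast{}{x|y,z} \leq k$ has been guessed, and the total number of guesses used so far is at most $k \card{\setZ}$. Hence $\tilde G(x|y) \leq \card{\setZ} \cdot \guessast{}{x|y,z}$ for every $(x,z)$, as required.

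For the equality claim, the upper bound above is tight block by block: if $Z = f(X,Y)$ and the stated implication holds, then the values of $Z$ are distinct within each block of $\card{\setZ}$ consecutive positions of $\guessast{}{\cdot\,|\,Y}$. Knowing $(Y,Z)$ therefore lets one restrict to one representative per block and re-use the ordering of $\guessast{}{\cdot\,|\,Y}$ to obtain $\guessast{}{X|Y,Z} = \lceil \guessast{}{X|Y}/\card{\setZ}\rceil$ pointwise, which matches the lower bound. Existence of such $f$ follows from a pigeonhole argument: partition $\{1, \ldots, \card{\setX}\}$ into consecutive blocks of length $\card{\setZ}$ (the last possibly shorter), and, for each $y$, assign to each position in a block a distinct label from $\setZ$.

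The only subtlety is the rearrangement step invoking optimality of $\guessast{}{\cdot\,|\,Y}$ for the ceiling objective; this is where I would be most careful, but it is a direct consequence of the fact that sorting in decreasing order of posterior probability minimizes $\bigEx{}{\phi(G(X|Y))}$ for any nondecreasing $\phi \colon \naturals \to \reals_+$, applied to $\phi(k) = \lceil k/\card{\setZ}\rceil^\rho$. The rest is bookkeeping on the interleaving construction.
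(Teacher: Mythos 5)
Your proof is correct and takes a genuinely different route from the paper's. The paper first reduces to the case $Z = g(X,Y)$ for a deterministic $g$ (by picking $g(x,y) \in \argmin_z G^\ast(x|y,z)$), and then directly identifies the minimizing pair $(g, G)$ by a sorting argument, observing that each conditional guessing rank $l$ can be attained by at most $\card\setZ$ values of $x$; the minimum is then computed and equals the RHS of \eqref{eq:impGuess}, which also gives the equality case for free. You instead keep $Z$ arbitrary and \emph{construct} a suboptimal guessing function $\tilde G(\cdot\,|\,Y)$ by round-robin interleaving of the $\card\setZ$ conditional orders, obtaining the pointwise bound $\lceil \tilde G(X|Y)/\card\setZ\rceil \le G^\ast(X|Y,Z)$, and then invoke the rearrangement fact that the posterior-sorted order simultaneously minimizes $\bigEx{}{\phi(G(X|Y))}$ for every nondecreasing $\phi$; the equality case then needs a separate argument, which you supply correctly (the key point being that all blocks preceding the one containing $x$ are full, so $G^\ast(x|y,f(x,y)) = \lceil G^\ast(x|y)/\card\setZ\rceil$ exactly). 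Your interleaving approach avoids the reduction to deterministic $Z$ and makes the ``cost at most $\card\setZ$ per rank'' idea explicit as a construction; the paper's approach is shorter because it proves inequality and equality in a single optimization. One small caveat worth spelling out: the statement ``any minimizer of $\bigEx{}{G(X|Y)^\rho}$ also minimizes $\bigEx{}{\phi(G(X|Y))}$'' uses $\rho>0$ to ensure every minimizer of the former is posterior-sorted (modulo ties with equal probabilities, which do not affect either objective); as stated, without the $\rho>0$ qualification, it would fail.
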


\begin{proof}
If $g \! \left( x, \! y \right) \! \in \! \argmin_{z \in \setZ} G^\ast \! \left( x|y, \! z \right)$ for $(x,y) \in \setX \times \setY$, then $\Ex {}{\guessast {}{X|Y,Z}^\rho} \! \geq \! \min_{G} \Ex {}{\guess {}{X|Y,g \! \left( X,Y \right)}^\rho}$. It thus suffices to prove \eqref{eq:impGuess} for the case where $Z$ is deterministic given $(X,Y)$, and we thus assume w.l.g.\ that $Z \! = \! g \! \left( X,Y \right)$ for some function $g \colon \setX \times \setY \! \rightarrow \! \setZ$. Consider
\ba
\Ex {}{\guess {}{X|Y,Z}^\rho} = \sum_{x,y} P_{X,Y} \! \left( x,y \right) \guess {}{x|y, g \! \left( x, y \right)}^\rho, \label{eq:expToMinImpGuess}
\ea
where $\guess {}{\cdot | Y, g \! \left( X,Y \right)}$ is a guessing function. Note that $\guess {}{x|y,g \! \left( x,y \right)} \! = \! \guess {}{\tilde x|y,g \! \left( \tilde x,y \right)}$ implies $g \! \left( x,y \right) \! \neq \! g \! \left( \tilde x,y \right)$ for all $y \! \in \! \setY$ and distinct $x, \tilde x \in \setX$. For every $l \! \in \! \naturals$ there are thus at most $\card \setZ$ different $x \in \setX$ for which $\guess {}{x|y,g \! \left( x,y \right)} = l$. For each $y \! \in \! \setY$ order the possible realizations of $X$ in decreasing order of $P_{X,Y} \! \left( x,y \right)$, i.e., in decreasing order of their posterior probabilities given $Y \! = \! y$, and let $x_j^y$ denote the $j$-th element. Clearly, \eqref{eq:expToMinImpGuess} is minimum over $g \! \left( \cdot, \cdot \right)$ and $\guess {}{\cdot | Y, g \! \left( X,Y \right)}$ if for $l \! \in \! \naturals$ and $y \! \in \! \setY$ we have $\guess {}{x|y, g \! \left( x, y \right)} = l$ whenever $x = x_j^y$ for some $j$ satisfying $(l - 1) \card \setZ + 1 \leq j \leq l \card \setZ$, i.e., $\left\lceil j / \card \setZ \right\rceil = l$. Since $\guessast {}{\cdot | Y}$ minimizes $\bigEx {}{\guess {}{X | Y}^\rho}$, it orders the elements of $\setX$ in increasing order of their posterior probabilities given $Y$. We can thus choose $x^y_j$ to be the unique $x \! \in \! \setX$ for which $\guessast {}{x | y} = j$. Hence, \eqref{eq:expToMinImpGuess} is minimized if $f \! \left( \cdot, \cdot \right)$ satisfies the specifications in the lemma, $g \! \left( \cdot, \! \cdot \right) \! = \! f \! \left( \cdot, \! \cdot \right)$, and $\guess {}{x|y, \! f \! \left( x, \! y \right)} \! = \! \left\lceil \guessast {}{x|y} / \card \setZ \right\rceil$. The minimum equals the RHS of \eqref{eq:impGuess}.
\end{proof}

Lemma~\ref{le:ImproveGuess} and \cite[Equation~(26)]{buntelapidoth14} imply the following result:

\begin{corollary}\label{co:impGuess}
Draw $\left( X,Y \right)$ from the finite set $\setX \times \setY$ according to the PMF $P_{X,Y}$, and let $\setZ$ be a finite set. There is a function $f \colon \setX \times \setY \rightarrow \setZ$ such that $Z = f \! \left( X,Y \right)$ satisfies
\be 
\min_G \Ex {}{\guess {}{X|Y,Z}^\rho} < 1 + 2^{\rho} \card \setZ^{-\rho} \min_G \Ex {}{\guess {}{X|Y}^\rho}. \label{eq:lbGuessSI}
\ee
Conversely, every chance variable $Z$ with alphabet $\setZ$ satisfies
\be
\min_G \Ex {}{\guess {}{X|Y,Z}^\rho} \geq \card \setZ^{-\rho} \min_G \bigEx {}{\guess {}{X | Y}^\rho} \vee 1. \label{eq:ubGuessSI}
\ee
\end{corollary}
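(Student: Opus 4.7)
The plan is to obtain Corollary~\ref{co:impGuess} as a direct consequence of Lemma~\ref{le:ImproveGuess} paired with the elementary pointwise bound $\lceil \xi \rceil^\rho < 1 + 2^\rho \xi^\rho$ valid for $\xi \geq 0$, which is \cite[Equation~(26)]{buntelapidoth14}. The direct part will invoke the equality case of the lemma, while the converse part will invoke the general inequality and then drop the ceiling via $\lceil \xi \rceil \geq \xi$.

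For the direct half, I would first verify that a function $f$ satisfying the specifications of Lemma~\ref{le:ImproveGuess} exists: since for each $y \in \setY$ the map $\guessast {}{\cdot | y}$ is a bijection onto $\{1,\ldots,\card\setX\}$, the set $\{x \in \setX \colon \lceil \guessast {}{x|y}/\card\setZ \rceil = l\}$ has cardinality at most $\card\setZ$ for every $l \in \naturals$, so distinct labels from $\setZ$ can be assigned to its elements, thereby defining $f$. Setting $Z = f \! \left( X,Y \right)$, the equality assertion of Lemma~\ref{le:ImproveGuess} yields
\bas
\min_G \Ex {}{\guess {}{X|Y,Z}^\rho} = \Ex {}{\lceil \guessast {}{X|Y}/\card\setZ \rceil^\rho}.
\eas
Applying the ceiling bound pointwise with $\xi = \guessast {}{X|Y}/\card\setZ$ and taking expectations then yields \eqref{eq:lbGuessSI}.

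For the converse, I would apply the general inequality \eqref{eq:impGuess} of Lemma~\ref{le:ImproveGuess} to an arbitrary chance variable $Z$ with alphabet $\setZ$ and bound $\lceil \xi \rceil \geq \xi$:
\bas
\min_G \Ex {}{\guess {}{X|Y,Z}^\rho} \geq \Ex {}{\lceil \guessast {}{X|Y}/\card\setZ \rceil^\rho} \geq \card\setZ^{-\rho} \min_G \Ex {}{\guess {}{X|Y}^\rho}.
\eas
Combining this with the trivial lower bound $\guess {}{\cdot|\cdot,\cdot} \geq 1$, which implies $\min_G \Ex {}{\guess {}{X|Y,Z}^\rho} \geq 1$, gives \eqref{eq:ubGuessSI}.

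There is no substantial obstacle: Lemma~\ref{le:ImproveGuess} performs the heavy lifting, and everything that remains is pointwise arithmetic with the ceiling bound. The only subtle point is the constructive existence of $f$ for the equality case, which follows from the cardinality observation above.
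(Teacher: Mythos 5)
Your proof is correct and follows exactly the route the paper intends: the paper states only that Corollary~\ref{co:impGuess} is implied by Lemma~\ref{le:ImproveGuess} together with the pointwise bound $\lceil \xi \rceil^\rho < 1 + 2^\rho \xi^\rho$, and you have carried out precisely this derivation, invoking the equality case of the lemma (with the existence argument for $f$ that the lemma itself supplies) for the direct half and the general inequality with $\lceil \xi \rceil \ge \xi$ plus $\guess{}{\cdot}\ge 1$ for the converse.
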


We now sketch the proofs of Theorems~\ref{th:distStorGuess} and \ref{th:distStor} starting with the direct part. Fix $\left( c_s,c_1,c_2 \right) \in \naturals^3$ satisfying \eqref{bl:condCsC1C2Guess} in the ''guessing version'' and \eqref{bl:condCsC1C2} in the ''list version''. For each $\nu \in \left\{ s,1,2 \right\}$ let $V_\nu$ be a chance variable taking value in the set $\setV_\nu = \left\{ 0, \ldots, c_\nu - 1 \right\}$. Corollary~\ref{co:impGuess} and \cite[Proposition~4]{arikan96} imply that there is a $0$-$1$ valued conditional PMF $\distof { \left( V_s, V_1, V_2 \right) = \left( v_s, v_1, v_2 \right) | X = x, Y = y}$ for which
\ba
\!\!\! \min_G \bigEx {}{\guess {}{X|Y, \! V_s, \! V_1, \! V_2}^\rho} \!\! < \! 1 \!\! + \! 2^{\rho \left( \! \renent {\tirho} {X \left| Y \right.} \! - \! \log \! \left( c_s c_1 c_2 \right) \! + \! 1 \! \right)}, \label{eq:BobMomDistStorDir1Guess}
\ea
and likewise \cite[Theorem~6.1]{buntelapidoth14} implies that there is a $0$-$1$ valued conditional PMF for which
\ba
\!\!\! \bigEx {}{|\setL^Y_{V_s,V_1,V_2}|^\rho} \!\! < \! 1 \! + \! 2^{\rho \left( \renent {\tirho} {X | Y} - \log \! \left( c_s c_1 c_2 - \log \! \card \setX - 2 \right) + 2 \right)}. \label{eq:BobMomDistStorDir1}
\ea
Both \eqref{bl:condCsC1C2Guess} and \eqref{bl:condCsC1C2} imply $\card {\setM_1} \geq c_s c_1$ and $\card {\setM_2} \geq c_s c_2$. It thus suffices to prove \eqref{eq:BobMomDistStorDirGuess}--\eqref{eq:EveMomDistStorDirGuess} and \eqref{eq:BobMomDistStorDir}--\eqref{eq:EveMomDistStorDir} for a conditional PMF \eqref{eq:aliceEncPMF} that assigns positive mass only to $c_s c_1$ elements of $\setM_1$ and $c_s c_2$ elements of $\setM_2$, and we thus assume w.l.g.\ that $\setM_1 = \setV_s \times \setV_1$ and $\setM_2 = \setV_s \times \setV_2$. Hence, we can choose $M_1 = \left( V_s \oplus_{c_s} \! U, V_1 \right)$ and $M_2 = \left( U, V_2 \right)$, where $U$ is independent of $\left( X, Y \right)$ and uniformly distributed over $\setV_s$, and where $\oplus_{c_s}$ denotes modulo-$c_s$ addition. For this choice \eqref{eq:BobMomDistStorDirGuess} follows from \eqref{eq:BobMomDistStorDir1Guess} and \eqref{eq:BobMomDistStorDir} from \eqref{eq:BobMomDistStorDir1}. The proof of \eqref{eq:EveMomDistStorDirGuess} and \eqref{eq:EveMomDistStorDir} is more involved. It builds on the following two ideas: 1) Since $U$ is computable from both $(X,M_1)$ and $(X,M_2)$ we can w.l.g.\ assume that Eve must guess $(X,U)$ instead of $X$. 2) Given two guessing functions $\guess 1 { \cdot,\cdot \left| Y, M_1 \right. }$ and $\guess 2 { \cdot,\cdot \left| Y, M_2 \right. }$ for $(X,U)$, one can show that $\guess 1 { \cdot,\cdot \left| Y, M_1 \right. } \wedge \guess 2 { \cdot,\cdot \left| Y, M_2 \right. }$ behaves like a guessing function $\guess {}{\cdot,\cdot | Y,Z}$ for $(X,U)$, where the additional side-information $Z$ assumes at most $c_s \! \left( c_1 + c_2 \right)$ different values. Once 1) and 2) have been established, the proof is concluded by Corollary~\ref{co:impGuess}, \cite[Theorem~1]{arikan96}, and $\renent {\tirho}{X, U | Y } \! = \! \renent {\trho} {X \left| Y \right.} \! + \! \log \! c_s$. 

The converse is straightforward: In the ''guessing version'' the bound \eqref{eq:BobMomDistStorConvGuess} on Bob's ambiguity follows from Corollary~\ref{co:impGuess} and \cite[Theorem~1]{arikan96}. In the ''list version'' \eqref{eq:BobMomDistStorConv} follows from \cite[Theorem~6.1]{buntelapidoth14} and the observation that $\mathscr A_{\text{B}}^{(\text g)}$ is minimized if the PMF in \eqref{eq:aliceEncPMF} is $0$-$1$ valued. Clearly, Eve's ambiguity satisfies $\mathscr A_{\text{E}} \! \left( P_{X,Y} \right) \leq \min_{k \in \left\{1,2\right\}} \Bigl( \min_{G_k} \Ex {}{\guess {k}{X|Y,M_k}^\rho} \Bigr)$, and Corollary~\ref{co:impGuess} implies for each $k \! \in \! \left\{1,2\right\}$ and $l \! \in \! \left\{ 1, \! 2 \right\} \! \setminus \! \set k$
\be 
\min_G \Ex {}{\guess {}{X|Y,M_1,M_2}^\rho} \geq \card {\setM_l}^{-\rho} \min_{G_k} \Ex {}{\guess k{X|Y,M_k}^\rho}.
\een
Since $\min_G \Ex {}{\guess {}{X|Y,M_1,M_2}^\rho} \leq \Ex {}{ | \setL^Y_{M_1,M_2} |^\rho}$ we thus find that in both versions Eve's ambiguity exceeds Bob's by at most a factor of $\card {\setM_1}^\rho \wedge \card {\setM_2}^\rho$. Due to \cite[Proposition~4]{arikan96} and because Eve can guess $X$ using only $Y$ we have $\min_G \Ex {}{\guess {}{X|Y}^\rho} \leq 2^{\rho \renent {\tirho}{X|Y}}$. Hence, \eqref{eq:EveMomDistStorConvGuess} and \eqref{eq:EveMomDistStorConv} hold.

\section{Discussion}\label{sec:guessListClose}

We next explain why the results for the ''guessing version'' and the ''list version'' differ only by polylogarithmic factors of $\card \setX$. In the ''guessing version'' Bob uses a guessing function $\guess {}{\cdot | Y,M_1,M_2}$ to guess $X$ based on the side-information $Y$ and the hints $M_1$ and $M_2$, and his ambiguity is $\Ex {}{\guess {}{X | Y,M_1,M_2}^\rho}$. On account of Corollary~\ref{co:guessToList} we can construct an additional hint $M$, which assumes $\log \! \card \setX + 1$ different values and satisfies $\Ex {}{ | \setL^Y_{M_1,M_2,M}|^\rho} \leq \bigEx {}{\guessast {}{X | Y, M_1, M_2 }^\rho}$, where $\setL^Y_{M_1,M_2,M}$ is the smallest list that is guaranteed to contain $X$ given $(Y,M_1,M_2,M)$. Suppose Alice maps $X$ to the hints $M_1' = (M_1,M)$ and $M_2' = M_2$. Then, Bob's ambiguity in the ''list version'' is $\bigEx {}{| \setL^Y_{M_1',M_2'}|^\rho} = \Ex {}{ | \setL^Y_{M_1,M_2,M}|^\rho}$ and hence no larger than $\Ex {}{\guess {}{X | Y,M_1,M_2}^\rho}$. Since $M$ assumes only $\log \! \card \setX + 1$ different values one can use Lemma~\ref{le:ImproveGuess} to show that Eve's ambiguity decreases by at most a polylogarithmic factor of $\card \setX$ (compared to the case where the hints are $M_1$ and $M_2$).

\section{Extensions}

In the full version of this paper \cite{bracherlapidoth14}, we discuss several modifications of the model: We extend the analysis to scenarios where we know in advance which hint Eve will find (e.g., since the other hint is stored in a safe), or where secrecy is achieved by means of a secret key, which is available to Alice and Bob but not to Eve. Moreover, we generalize the asymptotic results to the case where Bob and Eve must reconstruct $X^n$ within a given distortion $D$ (cf.\ \cite{arikanmerhav98,buntelapidoth14}).

\end{document}